\def\IT#1{\textit{{#1}}}
\def \BF #1{\textbf{{#1}}}
\newenvironment{bprooftree}
{\leavevmode\hbox\bgroup}
{\DisplayProof\egroup}
\title{On the Calculation of Fundamental Groups in Homotopy Type Theory by Means of Computational Paths}
\author{Tiago Mendon\c{c}a Lucena de Veras\inst{1} \and Arthur F. Ramos\inst{2} \and Ruy J. G. B. de Queiroz\inst{2} \and Anjolina G. de Oliveira\inst{2} }
\institute{	Departamento de Matem\'atica\\
	Universidade Federal Rural de Pernambuco\\
	\email{tiago.veras@ufrpe.br}
	\and
	Centro de Inform\'atica\\
	Universidade Federal de Pernambuco\\
	\email{afr@cin.ufpe.br}\\
	\email{ruy@cin.ufpe.br}\\
	\email{ago@cin.ufpe.br}\\
}
\begin{document}

	\maketitle
	
	\begin{abstract}
 
     One of the most interesting entities of homotopy type theory is the identity type. It gives rise to an interesting interpretation of the equality, since one can semantically interpret the equality between two terms of the same type as a collection of homotopical paths between points of the same space. Since this is only a semantical interpretation, the addition of paths to the syntax of homotopy type theory has been recently proposed by De Queiroz, Ramos and De Oliveira \cite{Ruy1,Art3}. In these works, the authors propose an entity known as `computational path', proposed by De Queiroz and Gabbay in 1994 \cite{Ruy4}, and show that it can be used to formalize the identity type. We have found that it is possible to use these computational paths as a tool to achieve one central result of algebraic topology and homotopy type theory: the calculation of fundamental groups of surfaces. We review the concept of computational paths and the $LND_{EQ}$-$TRS$, which is a term rewriting system proposed by De Oliveira in 1994 \cite{Anjo1} to map redundancies between computational paths. We then proceed to calculate the fundamental group of the circle, cylinder, M{\"o}bius band, torus and the real projective plane. Moreover, we show that the use of computational paths make these calculations simple and straightforward, whereas the same result is much harder to obtain using the traditional code-encode-decode approach of homotopy type theory.  \\
		\smallskip
		\noindent \textbf{Keywords.} Fundamental group, computational paths, homotopy type theory, algebraic topology, term rewriting system.
	\end{abstract}
	
	\section{Introduction}\label{intro}
	
	One of the most interesting and intriguing concepts of Martin-L\"of's Type Theory is the identity type. It has been thoroughly studied since the discovery of the so-called Univalent Models by Vladimir Voevodsky in 2005 \cite{Vlad1}. This discovered gave rise to a new area of research known as homotopy type theory. It is based on a groundbreaking semantic interpretation of the identity type: a witness  $p: Id_{A}(a,b)$ can be seen as  a homotopical path between the points $a$ and $b$ within a topological space $A$. As one can see in \cite{hott,Steve1}, this interpretation generated groundbreaking results. Nonetheless, one has to assert the fact that this interpretation is a semantical one, i.e., those paths are not part of the syntax of type theory.
   
    Based on the fact that paths are seen only as semantic entities in homotopy type theory, De Queiroz, De Oliveira and Ramos have recently proposed a way of adding those paths to the syntax of type theory. They have been proposing the addition of an entity called \textbf{computational path} to type theory. This entity would work as the syntactic counterpart of semantic paths. Thus, one could interpret the identity type as syntactic paths between two terms of a given type. The full development of this theory can be seen in two recently published papers, \cite{Ruy1} and \cite{Art3}.
    
    Here the theory of computational paths is going to be thoroughly used. Our purpose is to further explore the study of fundamental groups of many surfaces using homotopy type theory. Nevertheless, we find that the main technique currently used in homotopy type theory to study fundamental groups looks too complex and seems to ask for more clarity. We are referring to the code-encode-decode technique, extensively used in many proofs of traditional homotopy type theory, as one can see in \cite{hott}. Since paths are not present in the syntax of traditional type theory, one needs to use this technique to simulate the path-space. Thus, we found in the theory of computational paths an effective way of calculating the fundamental groups without the use of what seems to be overly complex techniques such as the code-encode-decode one. With that in mind, we are going to review the main concepts of the theory of computational paths, as proposed in \cite{Ruy1,Art3}, but we shaill stop short of diving into more advanced concepts of this theory.
    
    Since the notion of fundamental groups of surfaces is one of the central topics of interest in algebraic topology, the main topic of interest of this work is the calculation of these fundamental groups by means of computational paths. Given a surface $S$, the fundamental group is obtained by studying the paths on the surface starting and ending at the some point $x_ {0} \in S$. These paths are called \textit {loops $_{x_0}$} and $x_ {0} \in S$ is the base point. But, as it will become clear later on, we are not interested in all kinds of loops, but only the ones that are not homotopic to the point $x_ {0}$ (since they are homotopically the same and this path will be denoted by $\rho_{x_{0}}$, meaning `the reflexivity path'). 
     	
    First we will calculate the fundamental group of the circle, $\mathbb{S}^{1}$. We will use its proof to obtain the fundamental group of the cylinder, $\mathbb{C}$, and that of the M{\"o}bius band, $\mathbb{K}^{2}$. Next, we will get the fundamental group of two much more complex surfaces, the Torus $\mathbb{T}^{2}$ and the real projective plane $\mathbb{P}^{2}$.

	\section{Computational Paths} \label{path}
	
Before we enter in details of computational paths, let's recall what motivated the introduction of computational paths to type theory. In type theory, our types are interpreted using the so-called Brower-Heyting-Kolmogorov Interpretation. That way, a semantic interpretation of types are not given by truth-values, but by the concept of proof as a primitive notion. Thus, we have \cite{Ruy1}:

\begin{tabbing}
	\textbf{a proof of the proposition:} \hbox{\ \ \ \ \ } \= \textbf{is given by:} \\
	$A\land B$ \> a proof of $A$ \textbf{and} a proof of $B$ \\ 
	$A\lor B$ \> a proof of $A$ \textbf{or} a proof of $B$ \\
	$A\rightarrow B$ \> a \textbf{function} that turns a proof of $A$ into a proof of $B$ \\
	$\forall x^D.P(x)$ \> a \textbf{function} that turns an element $a$ into a proof of $P(a)$ \\
	$\exists x^D.P(x)$ \> an element $a$ (witness) \textbf{and a proof of} $P(a)$ \\
\end{tabbing}

Also, based on the Curry-Howard functional interpretation of logical connectives, one have \cite{Ruy1}:

\begin{tabbing}
	\textbf{a proof of the proposition:} \hbox{\ \ \ \ \ } \= \textbf{has the canonical form of:} \\
	$A\land B$ \> $\langle p,q\rangle$ where $p$ is a proof of $A$ and $q$ is a proof of $B$ \\
	$A\lor B$ \> $i(p)$ where $p$ is a proof of $A$ or $j(q)$ where $q$ is a proof of $B$ \\
	\> (`$i$' and `$j$' abbreviate `into the left/right disjunct') \\
	$A\rightarrow B$ \> $\lambda x.b(x)$ where $b(p)$ is a proof of B \\
	\> provided $p$ is a proof of A \\
	$\forall x^A.B(x)$ \> $\Lambda x.f(x)$ where $f(a)$ is a proof of $B(a)$ \\ 
	\> provided $a$ is an arbitrary individual chosen\\
	\> from the domain $A$\\ 
	$\exists x^A.B(x)$ \> $\varepsilon x.(f(x),a)$ where $a$ is a witness\\
	\> from the domain $A$, $f(a)$ is a proof of $B(a)$ \\
\end{tabbing}

If one looks closely, there is one interpretation missing in the BHK-Interpretation. What constitutes a proof of $t_{1} = t_{2}$? In other words, what is a proof of an equality statement? We answer this by proposing that an equality between those two terms should be a sequence of rewrites starting from $t_{1}$ and ending at $t_{2}$. Thus, we would have \cite{Ruy1}:

\begin{tabbing}
	\textbf{a proof of the proposition:} \hbox{\ \ \ \ \ } \= \textbf{is given by:} \\
	\\
	$t_1= t_2$ \> ? \\
	\> (Perhaps a sequence of rewrites \\
	\> starting from $t_1$ and ending in $t_2$?) \\
\end{tabbing}

We call computational path the sequence of rewrites between these terms.

\subsection{Formal Definition}

Since computational path is a generic term, it is important to emphasize the fact that we are using the term computational path in the sense defined by \cite{Ruy5}. A computational path is based on the idea that it is possible to formally define when two computational objects $a,b : A$ are equal. These two objects are equal if one can reach $b$ from $a$ applying a sequence of axioms or rules. This sequence of operations forms a path. Since it is between two computational objects, it is said that this path is a computational one. Also, an application of an axiom or a rule transforms (or rewrite) an term in another. For that reason, a computational path is also known as a sequence of rewrites. Nevertheless, before we define formally a computational path, we can take a look at one famous equality theory, the $\lambda\beta\eta-equality$ \cite{lambda}:

\begin{definition}
	The \emph{$\lambda\beta\eta$-equality} is composed by the following axioms:
	
	\begin{enumerate}
		\item[$(\alpha)$] $\lambda x.M = \lambda y.M[y/x]$ \quad if $y \notin FV(M)$;
		\item[$(\beta)$] $(\lambda x.M)N = M[N/x]$;
		\item[$(\rho)$] $M = M$;
		\item[$(\eta)$] $(\lambda x.Mx) = M$ \quad $(x \notin FV(M))$.
	\end{enumerate}
	
	And the following rules of inference:

	\bigskip
	\noindent
	\begin{bprooftree}
		\AxiomC{$M = M'$ }
		\LeftLabel{$(\mu)$ \quad}
		\UnaryInfC{$NM = NM'$}
	\end{bprooftree}
	\begin{bprooftree}
		\AxiomC{$M = N$}
		\AxiomC{$N = P$}
		\LeftLabel{$(\tau)$}
		\BinaryInfC{$M = P$}
	\end{bprooftree}
	
	\bigskip
	\noindent
	\begin{bprooftree}
		\AxiomC{$M = M'$ }
		\LeftLabel{$(\nu)$ \quad}
		\UnaryInfC{$MN = M'N$}
	\end{bprooftree}
	\begin{bprooftree}
		\AxiomC{$M = N$}
		\LeftLabel{$(\sigma)$}
		\UnaryInfC{$N = M$}
	\end{bprooftree}
	
	\bigskip
	\noindent
	\begin{bprooftree}
		\AxiomC{$M = M'$ }
		\LeftLabel{$(\xi)$ \quad}
		\UnaryInfC{$\lambda x.M= \lambda x.M'$}
	\end{bprooftree}
	
	\bigskip
	
	
	
\end{definition}





\begin{definition} \cite{lambda}
	$P$ is $\beta$-equal or $\beta$-convertible to $Q$  (notation $P=_\beta Q$)
	iff $Q$ is obtained from $P$ by a finite (perhaps empty)  series of $\beta$-contractions
	and reversed $\beta$-contractions  and changes of bound variables.  That is,
	$P=_\beta Q$ iff \textbf{there exist} $P_0, \ldots, P_n$ ($n\geq 0$)  such that
	$P_0\equiv P$,  $P_n\equiv Q$,
	$(\forall i\leq n-1) (P_i\triangleright_{1\beta}P_{i+1}  \mbox{ or }P_{i+1}\triangleright_{1\beta}P_i  \mbox{ or } P_i\equiv_\alpha P_{i+1}).$
\end{definition}
\noindent (Note that equality has an \textbf{existential} force, which will show in the proof rules for the identity type.)

The same happens with $\lambda\beta\eta$-equality:\\
\begin{definition}($\lambda\beta\eta$-equality \cite{lambda})
	The equality-relation determined by the theory $\lambda\beta\eta$ is called
	$=_{\beta\eta}$; that is, we define
	$$M=_{\beta\eta}N\quad\Leftrightarrow\quad\lambda\beta\eta\vdash M=N.$$
\end{definition}

\begin{example}
	Take the term $M\equiv(\lambda x.(\lambda y.yx)(\lambda w.zw))v$. Then, it is $\beta\eta$-equal to $N\equiv zv$ because of the sequence:\\
	$(\lambda x.(\lambda y.yx)(\lambda w.zw))v, \quad  (\lambda x.(\lambda y.yx)z)v, \quad   (\lambda y.yv)z , \quad zv$\\
	which starts from $M$ and ends with $N$, and each member of the sequence is obtained via 1-step $\beta$- or $\eta$-contraction of a previous term in the sequence. To take this sequence into a {\em path\/}, one has to apply transitivity twice, as we do in the example below.
\end{example}

\begin{example}\label{examplepath}
	The term $M\equiv(\lambda x.(\lambda y.yx)(\lambda w.zw))v$ is $\beta\eta$-equal to $N\equiv zv$ because of the sequence:\\
	$(\lambda x.(\lambda y.yx)(\lambda w.zw))v, \quad  (\lambda x.(\lambda y.yx)z)v, \quad   (\lambda y.yv)z , \quad zv$\\
	Now, taking this sequence into a path leads us to the following:\\
	The first is equal to the second based on the grounds:\\
	$\eta((\lambda x.(\lambda y.yx)(\lambda w.zw))v,(\lambda x.(\lambda y.yx)z)v)$\\
	The second is equal to the third based on the grounds:\\
	$\beta((\lambda x.(\lambda y.yx)z)v,(\lambda y.yv)z)$\\
	Now, the first is equal to the third based on the grounds:\\
	$\tau(\eta((\lambda x.(\lambda y.yx)(\lambda w.zw))v,(\lambda x.(\lambda y.yx)z)v),\beta((\lambda x.(\lambda y.yx)z)v,(\lambda y.yv)z))$\\
	Now, the third is equal to the fourth one based on the grounds:\\
	$\beta((\lambda y.yv)z,zv)$\\
	Thus, the first one is equal to the fourth one based on the grounds:\\
	$\tau(\tau(\eta((\lambda x.(\lambda y.yx)(\lambda w.zw))v,(\lambda x.(\lambda y.yx)z)v),\beta((\lambda x.(\lambda y.yx)z)v,(\lambda y.yv)z)),\beta((\lambda y.yv)z,zv)))$.
\end{example}


The aforementioned theory establishes the equality between two $\lambda$-terms. Since we are working with computational objects as terms of a type, we need to translate the $\lambda\beta\eta$-equality to a suitable equality theory based on Martin L\"of's type theory. We obtain:

\begin{definition}
	The equality theory of Martin L\"of's type theory has the following basic proof rules for the $\Pi$-type:
	
	\bigskip
	
	\noindent
	\begin{bprooftree}
		\hskip -0.3pt
		\alwaysNoLine
		\AxiomC{$N : A$}
		\AxiomC{$[x : A]$}
		\UnaryInfC{$M : B$}
		\alwaysSingleLine
		\LeftLabel{$(\beta$) \quad}
		\BinaryInfC{$(\lambda x.M)N = M[N/x] : B[N/x]$}
	\end{bprooftree}
	\begin{bprooftree}
		\hskip 11pt
		\alwaysNoLine
		\AxiomC{$[x : A]$}
		\UnaryInfC{$M = M' : B$}
		\alwaysSingleLine
		\LeftLabel{$(\xi)$ \quad}
		\UnaryInfC{$\lambda x.M = \lambda x.M' : (\Pi x : A)B$}
	\end{bprooftree}
	
	\bigskip
	
	\noindent
	\begin{bprooftree}
		\hskip -0.5pt
		\AxiomC{$M : A$}
		\LeftLabel{$(\rho)$ \quad}
		\UnaryInfC{$M = M : A$}
	\end{bprooftree}
	\begin{bprooftree}
		\hskip 100pt
		\AxiomC{$M = M' : A$}
		\AxiomC{$N : (\Pi x : A)B$}
		\LeftLabel{$(\mu)$ \quad}
		\BinaryInfC{$NM = NM' : B[M/x]$}
	\end{bprooftree}
	
	\bigskip
	
	\noindent
	\begin{bprooftree}
		\hskip -0.5pt
		\AxiomC{$M = N : A$}
		\LeftLabel{$(\sigma) \quad$}
		\UnaryInfC{$N = M : A$}
	\end{bprooftree}
	\begin{bprooftree}
		\hskip 105pt
		\AxiomC{$N : A$}
		\AxiomC{$M = M' : (\Pi x : A)B$}
		\LeftLabel{$(\nu)$ \quad}
		\BinaryInfC{$MN = M'N : B[N/x]$}
	\end{bprooftree}
	
	\bigskip
	
	\noindent
	\begin{bprooftree}
		\hskip -0.5pt
		\AxiomC{$M = N : A$}
		\AxiomC{$N = P : A$}
		\LeftLabel{$(\tau)$ \quad}
		\BinaryInfC{$M = P : A$}
	\end{bprooftree}
	
	\bigskip
	
	\noindent
	\begin{bprooftree}
		\hskip -0.5pt
		\AxiomC{$M: (\Pi x : A)B$}
		\LeftLabel{$(\eta)$ \quad}
		\RightLabel {$(x \notin FV(M))$}
		\UnaryInfC{$(\lambda x.Mx) = M: (\Pi x : A)B$}
	\end{bprooftree}
	
	\bigskip
	
\end{definition}

We are finally able to formally define computational paths:

\begin{definition}
	Let $a$ and $b$ be elements of a type $A$. Then, a \emph{computational path} $s$ from $a$ to $b$ is a composition of rewrites (each rewrite is an application of the inference rules of the equality theory of type theory or is a change of bound variables). We denote that by $a =_{s} b$.
\end{definition}

As we have seen in \emph{example \ref{examplepath}}, composition of rewrites are applications of the rule $\tau$. Since change of bound variables is possible, each term is considered up to $\alpha$-equivalence.

\subsection{Equality Equations}

One can use the aforementioned axioms to show that computational paths establishes the three fundamental equations of equality: the reflexivity, symmetry and transitivity:

\bigskip

\begin{bprooftree}
	\AxiomC{$a =_{t} b : A$}
	\AxiomC{$b =_{u} c : A$}
	\RightLabel{\IT{transitivity}}
	\BinaryInfC{$a =_{\tau(t,u)} c : A$}
\end{bprooftree}
\begin{bprooftree}
	\AxiomC{$a : A$}
	\RightLabel{\IT{reflexivity}}
	\UnaryInfC{$a =_{\rho} a : A$}
\end{bprooftree}

\bigskip
\begin{bprooftree}
	\AxiomC{$a =_{t} b : A$}
	\RightLabel{\IT{symmetry}}
	\UnaryInfC{$b =_{\sigma(t)} a : A$}
\end{bprooftree}

\bigskip

\subsection{Identity Type}

We have said that it is possible to formulate the identity type using computational paths. As we have seen, the best way to define any formal entity of type theory is by a set of natural deductions rules. Thus, we define our path-based approach as the following set of rules: 

\begin{itemize}
	
	\item Formation and Introduction rule \cite{Ruy1,Art3}:
	
	\bigskip
	\begin{center}
		\begin{bprooftree}
			\AxiomC{$A$ type}
			\AxiomC{$a : A$}
			\AxiomC{$b : A$}
			\RightLabel{$Id - F$}
			\TrinaryInfC{$Id_{A}(a,b)$ type}
		\end{bprooftree}
		
		\bigskip
		
		\begin{bprooftree}
			\AxiomC{$a =_{s} b : A$}
			\RightLabel{$Id - I$}
			\UnaryInfC{$s(a,b) : Id_{A}(a,b)$}
		\end{bprooftree}
	\end{center}
	\bigskip
	
	One can notice that our formation rule is exactly equal to the traditional identity type. From terms $a, b : A$, one can form that is inhabited only if there is a proof of equality between those terms, i.e., $Id_{A}(a,b)$.
	
	The difference starts with the introduction rule. In our approach, one can notice that we do not use a reflexive constructor $r$. In other words, the reflexive path is not the main building block of our identity type. Instead, if we have a computational path $a =_{s} b : A$, we introduce $s(a,b)$ as a term of the identity type. That way, one should see $s(a,b)$ as a sequence of rewrites and substitutions (i.e., a computational path) which would have started from $a$ and arrived at $b$
	
	\bigskip 
	
	\item Elimination rule \cite{Ruy1,Art3}:
	
	\begin{center}
		\begin{bprooftree}
			\alwaysNoLine
			\AxiomC{$m : Id_{A}(a,b)$ }
			\AxiomC{$[a =_{g} b : A]$}
			\UnaryInfC{$h(g) : C$}
			\alwaysSingleLine
			\RightLabel{$Id - E$}
			\BinaryInfC{$REWR(m, \acute{g}.h(g)) : C$}
		\end{bprooftree}
	\end{center}
	\bigskip
	
	Let's recall the notation being used. First, one should see $h(g)$ as a functional expression $h$ which depends on $g$. Also, one should notice the the use of `$\acute{\ }$' in $\acute{g}$. One should see `$\acute{\ }$' as an abstractor that binds the occurrences of the variable $g$ introduced in the local assumption $[a =_{g} b : A]$ as a kind of {\em Skolem-type\/} constant denoting the {\em reason\/} why $a$ was assumed to be equal to $b$.
	
	We also introduce the constructor $REWR$. In a sense, it is similar to the constructor $J$ of the traditional approach, since both arise from the elimination rule of the identity type. The behavior of $REWR$ is simple. If from a computational path $g$ that establishes the equality between $a$ and $b$ one can construct $h(g) : C$, then if we also have this equality established by a term $m : Id_{A}(a,b)$, we can put together all this information in $REWR$ to construct $C$, eliminating the type $Id_{A}(a,b)$ in the process. The idea is that we can substitute $g$ for $m$ in $\acute{g}.h(g)$, resulting in $h(m/g) : C$. This behavior is established next by the reduction rule.
	
	\item Reduction rule \cite{Ruy1,Art3}:
	
	\bigskip
	\begin{center}
		\begin{bprooftree}
			\AxiomC{$a =_{m} b : A$}
			\RightLabel{$Id - I$}
			\UnaryInfC{$m(a,b) : Id_{A}(a,b)$}
			\alwaysNoLine
			\AxiomC{$[a =_{g} b : A]$}
			\UnaryInfC{$h(g) : C$}
			\alwaysSingleLine
			\RightLabel{$Id - E$ \quad $\rhd_\beta$}
			\BinaryInfC{$REWR(m, \acute{g}.h(g)) : C$}
		\end{bprooftree}
		\begin{bprooftree}
			\AxiomC{$[a =_{m} b : A]$}
			\alwaysNoLine
			\UnaryInfC{$h(m/g):C$}
		\end{bprooftree}
	\end{center}
	\bigskip
	
	\item Induction rule:
	
	\bigskip
	\begin{center}
		\begin{bprooftree}
			\AxiomC{$e : Id_{A}(a,b)$}
			\AxiomC{$[a =_{t} b : A]$}
			\RightLabel{$Id - I$}
			\UnaryInfC{$t(a, b) : Id_{A}(a, b)$}
			\RightLabel{$Id - E$ \quad  $\rhd_{\eta}$ \quad $e : Id_{A}(a,b)$}
			\BinaryInfC{$REWR(e, \acute{t}.t(a,b)) : Id_{A}(a,b)$}
		\end{bprooftree}
	\end{center}
	\bigskip
	
\end{itemize}

Our introduction and elimination rules reassures the concept of equality as an \BF{existential force}. In the introduction rule, we encapsulate the idea that a witness of a identity type $Id_{A}(a,b)$ only exists if there exist a computational path establishing the equality of $a$ and $b$. Also, one can notice that our elimination rule is similar to the elimination rule of the existential quantifier.

\subsection{Path-based Examples}

The objective of this subsection is to show how to use in practice the rules that we have just defined. The idea is to show construction of terms of some important types. The constructions that we have chosen to build are the reflexive, transitive and symmetric type of the identity type. Those were not random choices. The main reason is the fact that reflexive, transitive and symmetric types are essential to the process of building a groupoid model for the identity type \cite{hofmann1}. As we shall see, these constructions come naturally from simple computational paths constructed by the application of axioms of the equality of type theory.

Before we start the constructions, we think that it is essential to understand how to use the eliminations rules. The process of building a term of some type is a matter of finding the right reason. In the case of $J$, the reason is the correct $x,y : A$ and $z : Id_{A}(a,b)$ that generates the adequate $C(x,y,z)$. In our approach, the reason is the correct path $a =_{g} b$ that generates the adequate $g(a,b) : Id(a,b)$.

\subsubsection{Reflexivity.}

One could find strange the fact that we need to prove the reflexivity. Nevertheless, just remember that our approach is not based on the idea that reflexivity is the base of the identity type. As usual in type theory, a proof of something comes down to a construction of a term of a type. In this case, we need to construct a term of type $\Pi_{(a : A)}Id_{A}(a,a)$. The reason is extremely simple: from a term $a : A$, we obtain the computational path $a =_{\rho} a : A$ \cite{Art3}:

\bigskip

\begin{center}
	\begin{bprooftree}
		\AxiomC{$[a : A]$}
		\UnaryInfC{$a =_{\rho} a : A$}
		\RightLabel{$Id - I$}
		\UnaryInfC{$\rho(a,a) : Id_{A}(a,a)$}
		\RightLabel{$\Pi-I$}
		\UnaryInfC{$\lambda a.\rho(a,a) : \Pi_{(a : A)}Id_{A}(a,a)$}
	\end{bprooftree}
\end{center}

\subsubsection{Symmetry.}

The second proposed construction is the symmetry. Our objective is to obtain a term of type  $\Pi_{(a : A)}\Pi_{(b : A)}(Id_{A}(a,b) \rightarrow Id_{A}(b,a))$.

We construct a proof using computational paths. As expected, we need to find a suitable reason. Starting from $a =_{t} b$, we could look at the axioms of \emph{definition 4.1} to plan our next step. One of those axioms makes the symmetry clear: the $\sigma$ axiom. If we apply $\sigma$, we will obtain $b =_{\sigma(t)} a$. From this, we can then infer that $Id_A$ is inhabited by $(\sigma(t))(b,a)$. Now, it is just a matter of applying the elimination \cite{Art3}:

\bigskip

\begin{center}
	\begin{bprooftree}
		\alwaysNoLine
		\AxiomC{$[a:A] \quad [b:A]$}
		\UnaryInfC{$[p(a,b) : Id_{A}(a,b)]$}
		\alwaysSingleLine
		\AxiomC{[$a =_{t} b : A$]}
		\UnaryInfC{$b =_{\sigma(t)} a : A$}
		\RightLabel{$Id - I$}
		\UnaryInfC{$(\sigma(t))(b,a) : Id_{A}(b,a)$}
		\RightLabel{$Id - E$}
		\BinaryInfC{$REWR(p(a,b),\acute{t}.(\sigma(t))(b,a)) : Id_{A}(b,a)$}
		\RightLabel{$\rightarrow - I$}
		\UnaryInfC{$\lambda p.REWR(p(a,b), \acute{t}.(\sigma(t))(b,a)) : Id_{A} (a,b) \rightarrow Id_{A}(b,a)$}
		\RightLabel{$\Pi-I$}
		\UnaryInfC{$\lambda b. \lambda p.REWR(p(a,b),\acute{t}.(\sigma(t))(b,a)) :  \Pi_{(b : A)}(Id_{A} (a,b) \rightarrow Id_{A}(b,a))$}
		\RightLabel{$\Pi-I$}
		\UnaryInfC{$\lambda a.\lambda b. \lambda p.REWR(p(a,b), \acute{t}.(\sigma(t))(b,a)) :  \Pi_{(a : A)}\Pi_{(b : A)}(Id_{A} (a,b) \rightarrow Id_{A}(b,a))$}
	\end{bprooftree}
\end{center}

\bigskip

\subsubsection{Transitivity.}
The third and last construction will be the transitivity. Our objective os to obtain a term of type  $\Pi_{(a : A)}\Pi_{(b : A)}\Pi_{(c : A)} (Id_{A}(a,b) \rightarrow Id_{A}(b,c) \rightarrow Id_{A}(a,c))$.

To build our path-based construction, the first step, as expected, is to find the reason. Since we are trying to construct the transitivity, it is natural to think that we should start with paths $a =_{t} b$ and $b =_{u} c$ and then, from these paths, we should conclude that there is a path $z$ that establishes that $a =_{z} c$. To obtain $z$, we could try to apply the axioms of \emph{definition 4.1}. Looking at the axioms, one is exactly what we want: the axiom $\tau$. If we apply $\tau$ to  $a =_{t} b$ and $b =_{u} c$, we will obtain a new path $\tau(t,u)$ such that $a = _{\tau(t,u)} c$. Using that construction as the reason, we obtain the following term \cite{Art3}:

\begin{center}
	\begin{figure}
		\begin{sideways}
			\begin{bprooftree}
				\alwaysNoLine
				\AxiomC{$[a:A] \quad [b:A]$}
				\UnaryInfC{$[w(a,b) : Id_{A}(a,b)]$}
				\alwaysNoLine
				\AxiomC{$[c:A]$}
				\UnaryInfC{$[s(b,c) : Id_{A}(b,c)]$}
				\alwaysSingleLine
				\AxiomC{$[a =_{t} b:A]$}
				\AxiomC{$[b =_{u} c:A]$}
				\BinaryInfC{$a =_{\tau(t,u)} c:A$}
				\RightLabel{$Id - I$}
				\UnaryInfC{$(\tau (t,u))(a,c) : Id_{A}(a,c)$}
				\RightLabel{$Id - E$}
				\BinaryInfC{$REWR(s(b,c),\acute{u}(\tau (t,u))(a,c)) : Id_{A}(a,c)$}
				\RightLabel{$Id - E$}
				\BinaryInfC{$REWR(w(a,b),\acute{t}REWR(s(b,c),\acute{u}(\tau (t,u))(a,c))) : Id_{A}(a,c)$}
				\RightLabel{$\rightarrow - I$}
				\UnaryInfC{$\lambda s.REWR(w(a,b),\acute{t}REWR(s(b,c),\acute{u}(\tau (t,u))(a,c))) : Id_{A}(b,c) \rightarrow Id_{A}(a,c)$}
				\RightLabel{$\rightarrow - I$}
				\UnaryInfC{$\lambda w.\lambda s.REWR(w(a,b),\acute{t}REWR(s(b,c),\acute{u}(\tau (t,u))(a,c))) : Id_{A}(a,b) \rightarrow Id_{A}(b,c) \rightarrow Id_{A}(a,c)$}
				\RightLabel{$\Pi-I$}
				\UnaryInfC{$\lambda c.\lambda w.\lambda s.REWR(w(a,b),\acute{t}REWR(s(b,c),\acute{u}(\tau (t,u))(a,c))) :  \Pi_{(c : A)}(Id_{A}(a,b) \rightarrow Id_{A}(b,c) \rightarrow Id_{A}(a,c))$}
				\RightLabel{$\Pi-I$}
				\UnaryInfC{$\lambda b. \lambda c.\lambda w.\lambda s.REWR(w(a,b),\acute{t}REWR(s(b,c),\acute{u}(\tau (t,u))(a,c))) :  \Pi_{(b : A)}\Pi_{(c : A)}(Id_{A}(a,b) \rightarrow Id_{A}(b,c) \rightarrow Id_{A}(a,c))$}
				\RightLabel{$\Pi-I$}
				\UnaryInfC{$\lambda a. \lambda b. \lambda c.\lambda w.\lambda s.REWR(w(a,b),\acute{t}REWR(s(b,c),\acute{u}(\tau (t,u))(a,c))) :   \Pi_{(a : A)}\Pi_{(b : A)}\Pi_{(c : A)}(Id_{A}(a,b) \rightarrow Id_{A}(b,c) \rightarrow Id_{A}(a,c))$}
			\end{bprooftree}
		\end{sideways}
	\end{figure}
\end{center}

\newpage

As one can see, each step is just straightforward applications of introduction, elimination rules and abstractions. The only idea behind this construction is just the simple fact that the axiom $\tau$ guarantees the transitivity of paths.

\section{A Term Rewriting System for Paths}
	
	As we have just shown, a computational path establishes when two terms of the same type are equal. From the theory of computational paths, an interesting case arises. Suppose we have a path $s$ that establishes that $a =_{s} b : A$ and a path $t$ that establishes that $a =_{t} b : A$. Consider that $s$ and $t$ are formed by distinct compositions of rewrites. Is it possible to conclude that there are cases that $s$ and $t$ should be considered equivalent? The answer is \emph{yes}. Consider the following example:
	
	\begin{example}
		\noindent \normalfont Consider the path  $a =_{t} b : A$. By the symmetry property, we obtain $b =_{\sigma(t)} a : A$. What if we apply the property again on the path $\sigma(t)$? We would obtain a path  $a =_{\sigma(\sigma(t))} b : A$. Since we applied symmetry twice in succession, we obtained a path that is equivalent to the initial path $t$. For that reason, we conclude the act of applying symmetry twice in succession is a redundancy. We say that the path $\sigma(\sigma(t))$ can be reduced to the path $t$.
	\end{example}
	
	As one could see in the aforementioned example, different paths should be considered equal if one is just a redundant form of the other. The example that we have just seen is just a straightforward and simple case. Since the equality theory has a total of 7 axioms, the possibility of combinations that could generate redundancies are rather high. Fortunately, most possible redundancies were thoroughly mapped by \cite{Anjo1}. In that work, a system that establishes redundancies and creates rules that solve them was proposed. This system, known as $LND_{EQ}$-$TRS$, originally mapped a total of 39 rules. For each rule, there is a proof tree that constructs it.  We included all rules in \textbf{appendix B}. To illustrate those rules, take the case of \textbf{example 2}. We have the following \cite{Ruy1}:
	
	\bigskip
	\begin{prooftree}
		\AxiomC{$x =_{t} y : A$}
		\UnaryInfC{$y =_{\sigma(t)} x : A$}
		\RightLabel{\quad $\rhd_{ss}$ \quad $x =_{t} y : A$}
		\UnaryInfC{$x =_{\sigma(\sigma(t))} y : A$}
	\end{prooftree}
	
	\bigskip
	
	It is important to notice that we assign a label to every rule. In the previous case, we assigned the label $ss$.

	\begin{definition}($rw$-rule \cite{Art3})
		\normalfont An $rw$-rule is any of the rules defined in $LND_{EQ}$-$TRS$.
	\end{definition}
	
	\begin{definition}($rw$-contraction \cite{Art3})
		Let $s$ and $t$ be computational paths. We say that $s \rhd_{1rw} t$ (read as: $s$ $rw$-contracts to $t$) iff we can obtain $t$ from $s$ by an application of only one $rw$-rule. If $s$ can be reduced to $t$ by finite number of $rw$-contractions, then we say that $s \rhd_{rw} t$ (read as $s$ $rw$-reduces to $t$).
		
	\end{definition}
	
	\begin{definition}($rw$-equality \cite{Art3})
		\normalfont  Let $s$ and $t$ be computational paths. We say that $s =_{rw} t$ (read as: $s$ is $rw$-equal to $t$) iff $t$ can be obtained from $s$ by a finite (perhaps empty) series of $rw$-contractions and reversed $rw$-contractions. In other words, $s =_{rw} t$ iff there exists a sequence $R_{0},....,R_{n}$, with $n \geq 0$, such that
		
		\centering $(\forall i \leq n - 1) (R_{i}\rhd_{1rw} R_{i+1}$ or $R_{i+1} \rhd_{1rw} R_{i})$
		
		\centering  $R_{0} \equiv s$, \quad $R_{n} \equiv t$
	\end{definition}
	
	\begin{proposition}\label{proposition3.7} $rw$-equality  is transitive, symmetric and reflexive.
	\end{proposition}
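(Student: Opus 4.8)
The plan is to read the three properties directly off Definition~3.6, since $rw$-equality is by construction the \emph{zigzag closure} of the one-step relation $\rhd_{1rw}$: a witness for $s =_{rw} t$ is a finite (possibly empty) sequence $R_0, \ldots, R_n$ with $R_0 \equiv s$, $R_n \equiv t$, in which each consecutive pair is related by $\rhd_{1rw}$ in one direction or the other. Crucially, none of reflexivity, symmetry, or transitivity needs any information about the concrete $rw$-rules of $LND_{EQ}$-$TRS$; each follows purely from the combinatorial shape of this definition. For \emph{reflexivity}, I would invoke the clause that the series may be empty: taking $n = 0$ and the one-element sequence $R_0 \equiv s$, the universally quantified stepwise condition ranges over indices $i \leq n - 1 = -1$ and is therefore vacuously satisfied, while $R_0 \equiv s$ and $R_n \equiv s$. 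Hence $s =_{rw} s$.

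For \emph{symmetry}, suppose $s =_{rw} t$ is witnessed by $R_0, \ldots, R_n$ with $R_0 \equiv s$ and $R_n \equiv t$. I would reverse the sequence, setting $R'_j := R_{n-j}$ for $0 \leq j \leq n$, so that $R'_0 \equiv t$ and $R'_n \equiv s$. The stepwise requirement for the reversed sequence at index $j$ reads ``$R'_j \rhd_{1rw} R'_{j+1}$ or $R'_{j+1} \rhd_{1rw} R'_j$'', which unfolds to ``$R_{n-j} \rhd_{1rw} R_{n-j-1}$ or $R_{n-j-1} \rhd_{1rw} R_{n-j}$''. This is exactly the original condition at index $n-j-1$ with its two disjuncts swapped, and since that disjunction is symmetric it still holds. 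Thus $R'_0, \ldots, R'_n$ witnesses $t =_{rw} s$.

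For \emph{transitivity}, suppose $s =_{rw} t$ via $R_0, \ldots, R_n$ and $t =_{rw} u$ via $S_0, \ldots, S_m$, where $R_n \equiv t \equiv S_0$. I would concatenate the two witnesses at their shared midpoint, forming $R_0, \ldots, R_{n-1}, R_n = S_0, S_1, \ldots, S_m$, whose first element is $s$ and whose last is $u$. Every consecutive pair inside the $R$-block and inside the $S$-block already satisfies the stepwise disjunction by hypothesis, and the single junction pair $(R_n, S_1) = (S_0, S_1)$ does too; the glued sequence therefore witnesses $s =_{rw} u$.

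The only point demanding a little care -- and the closest thing to an obstacle -- is the index bookkeeping. One must handle the degenerate cases $n = 0$ or $m = 0$ in the concatenation, where one block collapses to its single endpoint and the junction coincides with a boundary, and one must confirm that the reversal map $j \mapsto n - j$ aligns the stepwise conditions correctly. Both are routine once the vacuous-quantifier convention for the empty series is made explicit, so the proof reduces to these three elementary manipulations of sequences.
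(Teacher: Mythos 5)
Your proof is correct and takes essentially the same route as the paper, which simply observes that $rw$-equality is by definition the reflexive, symmetric and transitive closure of the one-step relation $\rhd_{1rw}$; your explicit manipulations (the empty sequence for reflexivity, reversal for symmetry, concatenation for transitivity) are precisely the unpacking of that closure fact. You have only made explicit the bookkeeping that the paper's one-line proof leaves implicit.
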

	
	\begin{proof}
		Comes directly from the fact that $rw$-equality is the transitive, reflexive and symmetric closure of $rw$.
	\end{proof}
	
	The above proposition is rather important, since sometimes we want to work with paths up to $rw$-equality. For example, we can take a path $s$ and use it as a representative of an equivalence class, denoting this by $[s]_{rw}$.
	
	We'd like to mention that  $LND_{EQ}$-$TRS$ is terminating and confluent. The proof of this can be found in \cite{Anjo1,Ruy2,Ruy3,RuyAnjolinaLivro}.
	
	One should refer to \cite{Ruy1,RuyAnjolinaLivro} for a more complete and detailed explanation of the rules of $LND_{EQ}$-$TRS$.

	
	\section{Fundamental Group of surfaces obtained by means of Computational Paths}
	
	The objective of this section is to show that it is possible to use computational paths to obtain the fundamental group of the some surfaces, and this is one of the main results of homotopy theory. We avoid again the use of the heavy and rather complicated machinery of the code-encode-decode approach. In what follows we will get the fundamental group of some surfaces.
	
	
	\subsection{Fundamental Group of Circle $S^{1}$}
	\begin{definition}[The circle $S^1$]
		The circle is the type generated by:
		
		\begin{itemize}
			\item A point - $base : S^1$	
			\item A computational path - $base \underset{loop}{=} base : S^1$.
		\end{itemize}
	\end{definition}
	
	The first thing one should notice is that this definition doest not use only the points of the type $S^1$, but also a computational path $loop$ between those points. That is why it is called a higher inductive type \cite{hott}. Our approach differs from the one developed in the HTT book on the fact that we do not need to simulate the path-space between those points, since computational paths do exist in the syntax of the theory. Thus, if one starts with a path  $base \underset{loop}{=} base : S^1$, one can naturally obtain additional paths applying the path-axioms $\rho$, $\tau$ and $\sigma$.  Thus, one has a path $\sigma(loop) = loop^{-1}$, $\tau(loop, loop)$, etc. In Martin-L\"of's type theory, the existence of those additional paths comes from establishing that the paths should be freely generated by the constructors \cite{hott}. In our approach, we do not have to appeal to this kind of argument, since all paths come naturally from direct applications of the axioms and the inference rules which define the theory of equality.
	
	With that in mind, one can define the fundamental group of a circle. In homotopy theory, the fundamental group is the one formed by all equivalence classes up to homotopy of paths (loops) starting from a point $a$ and also ending at $a$. Since the we use computational paths as the syntactic counterpart of homotopic paths in type theory, we use it to propose the following definition:
	
	\begin{definition}[$\Pi_{1}(A,a)$ structure]
		$\Pi_{1}(A,a)$ is a structure defined as follows:
		
		\begin{center}
			$\Pi_{1}(A, a) = \{[loop]_{rw} \mid a \underset{loop}{=} a: A\}$
		\end{center}
	\end{definition}
	
	We use this structure to define the fundamental group of a circle. We also need to show that it is indeed a group.
	
	\begin{proposition}
		$(\Pi_{1}(S,a), \circ)$ is a group.
	\end{proposition}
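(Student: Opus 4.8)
The plan is to equip $\Pi_{1}(S^{1},a)$ with the composition operation induced by transitivity, namely $[s]_{rw} \circ [t]_{rw} := [\tau(s,t)]_{rw}$, and then to verify the four group axioms---well-definedness together with closure, associativity, existence of an identity, and existence of inverses---by exhibiting, for each, the relevant rewrite rule of $LND_{EQ}$-$TRS$. The guiding observation is that the groupoid laws are already encoded as $rw$-rules, so most of the work amounts to pointing to the correct rule rather than constructing a fresh argument. I would also remark that nothing in the verification is special to the circle: the same reasoning shows that $(\Pi_{1}(A,a), \circ)$ is a group for any type $A$ and any basepoint $a$.

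First I would check that $\circ$ is well-defined on $rw$-equivalence classes and that the carrier is closed under it. Closure is immediate: if $a =_{s} a$ and $a =_{t} a$ are loops at $a$, then transitivity produces a loop $a =_{\tau(s,t)} a$, again based at $a$. Well-definedness requires that $s =_{rw} s'$ and $t =_{rw} t'$ imply $\tau(s,t) =_{rw} \tau(s',t')$; this follows from the compatibility of $rw$-reduction with the $\tau$-constructor, which is part of $LND_{EQ}$-$TRS$, so that $\circ$ descends to representatives and $[s]_{rw}\circ[t]_{rw}$ does not depend on the chosen loops.

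Next I would produce the identity and the inverses. The reflexive path gives the candidate identity $[\rho]_{rw}$: the rules $\tau(\rho, t) \rhd_{rw} t$ and $\tau(t, \rho) \rhd_{rw} t$ yield $[\rho]_{rw} \circ [t]_{rw} = [t]_{rw} = [t]_{rw} \circ [\rho]_{rw}$. For inverses, given $[s]_{rw}$ I would take $[\sigma(s)]_{rw}$ and invoke $\tau(s, \sigma(s)) \rhd_{rw} \rho$ together with $\tau(\sigma(s), s) \rhd_{rw} \rho$, giving $[s]_{rw} \circ [\sigma(s)]_{rw} = [\rho]_{rw} = [\sigma(s)]_{rw} \circ [s]_{rw}$. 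Associativity follows in the same spirit from the $rw$-rule relating $\tau(\tau(s,t),u)$ and $\tau(s,\tau(t,u))$.

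The main obstacle, such as it is, will be bookkeeping rather than conceptual: I would need to locate in \textbf{appendix B} exactly which of the 39 rules furnish the identity, inverse, and associativity laws, and to confirm that the congruence of $rw$-reduction with respect to $\tau$ is genuinely available, so that passing to equivalence classes is legitimate. Since $LND_{EQ}$-$TRS$ is terminating and confluent, one could alternatively represent each class by its unique normal form, but even then the group laws still rest on the same handful of rules; the real content of the proposition is precisely that those rules are present in the system.
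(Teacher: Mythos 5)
Your proposal is correct and follows essentially the same route as the paper: composition is defined via $\tau$, the identity is $\rho$ (rules $tlr$/$trr$), inverses are given by $\sigma$ (rules $tr$/$tsr$), and associativity comes from the rule $tt$. The only additions are your explicit check that $\circ$ is well-defined on $rw$-equivalence classes (a congruence point the paper leaves implicit) and your remark that the argument is not specific to $S^{1}$; both are sound but do not change the substance of the argument.
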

	
	\begin{proof}
		The first thing to define is the group operation $\circ$. Given any $a \underset{r}{=} a : S^1$ and $a \underset{t}{=} a : S^1$, we define $r \circ s$ as $\tau(s,r)$. Thus, we now need to check the group conditions:
		
		\begin{itemize}
			
			\item Closure: Given $a \underset{r}{=} a : S^1$ and $a \underset{t}{=} a : S^1$, $r \circ s$ must be a member of the group. Indeed, $r \circ s = \tau(s,r)$ is a computational path $a \underset{\tau(s,r)}{=} a : S^1$.
			\bigskip
			\item Inverse: Every member of the group must have an inverse. Indeed, if we have a path $r$, we can apply $\sigma(r)$. We claim that $\sigma(r)$ is the inverse of $r$, since we have:
			
			\begin{center}
				$\sigma(r) \circ r = \tau(r, \sigma(r)) \underset{tr}{=} \rho$
				
				$r \circ \sigma(r) = \tau(\sigma(r), r) \underset{tsr}{=} \rho$
			\end{center}
			
			Since we are working up to $rw$-equality, the equalities hold strictly.
			
			\item Identity: We use the path $a \underset{\rho}{=} a : S^1$ as the identity. Indeed, we have:
			
			\begin{center}
				$r \circ \rho = \tau(\rho,r) \underset{tlr}{=} r$
				
				$\rho \circ r = \tau(r,\rho) \underset{trr}{=} r$.
			\end{center}
			
			\item Associativity: Given any members of the group $a \underset{r}{=}a : S^1$, $a \underset{t}{=} a$ and $a \underset{s}{=} a$, we want that $r \circ (s \circ t) = (r \circ s) \circ t$:
			
			\begin{center}
				$r \circ (s \circ t) = \tau(\tau(t,s), r) \underset{tt}{=} \tau(t,\tau(s,r)) = (r \circ s) \circ t$
			\end{center}
			
		\end{itemize}
		
		All conditions have been satisfied. $(\Pi_{1}(S,a), \circ)$ is a group.
	\end{proof}
	
	Thus, 	$(\Pi_{1}(S,a), \circ)$ is indeed a group. We call this group the fundamental group of $S^1$. Therefore, the objective of this section is to show that $\Pi_{1}(S,a) \simeq \mathbb{Z}$.
	
	Before we start to develop this proof, the following lemma will prove to be useful:
	
	\begin{lemma}
		All paths generated by a path $a \underset{loop}{=} a$ are $rw$-equal to a path $loop^{n}$, for $n \in \mathbb Z$.
	\end{lemma}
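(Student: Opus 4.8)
The plan is to argue by structural induction on how the path is generated, after fixing a normal form. Every path on $S^1$ is built from the generators $loop$ and $\rho$ by repeated application of $\sigma$ and $\tau$, so I would first single out the canonical paths $loop^n$: for $n > 0$ let $loop^n := \tau(loop, \tau(loop, \dots, loop))$ ($n$ copies of $loop$), let $loop^0 := \rho$, and for $n > 0$ let $loop^{-n}$ be the analogous $n$-fold composition of $\sigma(loop)$. The goal then becomes showing that the set $\{[loop^n]_{rw} \mid n \in \mathbb{Z}\}$ is closed under the two generating operations, so that each generated path lies in one of these classes.

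The core of the argument is two auxiliary lemmas about the canonical forms, each proved by induction using the rules of $LND_{EQ}$-$TRS$. The first is a concatenation lemma, $\tau(loop^m, loop^n) =_{rw} loop^{m+n}$ for all $m, n \in \mathbb{Z}$; when $m$ and $n$ have the same sign it follows from associativity (the $tt$ rule) together with the identity rules $tlr$ and $trr$. The second is an inverse lemma, $\sigma(loop^n) =_{rw} loop^{-n}$, proved by induction on $|n|$ using the double-symmetry rule $ss$ (the rule $\sigma(\sigma(t)) \rhd t$ displayed earlier) and the $LND_{EQ}$-$TRS$ rule that distributes $\sigma$ over $\tau$, namely $\sigma(\tau(t,s)) =_{rw} \tau(\sigma(s), \sigma(t))$.

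With these in hand the main induction is immediate. The base cases are $loop = loop^1$ and $\rho = loop^0$. For a path $\sigma(s)$, the induction hypothesis gives $s =_{rw} loop^n$, so $\sigma(s) =_{rw} \sigma(loop^n) =_{rw} loop^{-n}$ by the inverse lemma. For a path $\tau(s,t)$, the hypothesis gives $s =_{rw} loop^m$ and $t =_{rw} loop^n$, so $\tau(s,t) =_{rw} \tau(loop^m, loop^n) =_{rw} loop^{m+n}$ by the concatenation lemma. Because $rw$-equality is transitive, symmetric and reflexive (Proposition~\ref{proposition3.7}), these chains of equalities are legitimate.

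The step I expect to be the main obstacle is the concatenation lemma in the mixed-sign case, where cancellation is required; for example $\tau(loop^2, \sigma(loop))$ must reduce to $loop$. Here one uses associativity to bring a $loop$ factor next to a $\sigma(loop)$ factor and then annihilates the adjacent pair with the cancellation rules $tr$ ($\tau(loop, \sigma(loop)) =_{rw} \rho$) and $tsr$ ($\tau(\sigma(loop), loop) =_{rw} \rho$), repeating until no opposite adjacent pair remains. This is precisely the reduction to reduced words in the free group on one generator, and the genuine work is the bookkeeping that guarantees each cancellation strictly decreases the number of factors so the process terminates at a well-defined power; the termination and confluence of $LND_{EQ}$-$TRS$ ensure the resulting normal form does not depend on the order of reductions.
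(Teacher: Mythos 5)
Your proof is correct, but it is organized differently from the paper's. The paper proves the lemma by a single induction in which a path is always extended by composing an existing canonical path with one generator at a time: its six cases are $\rho \circ loop$, $\rho \circ \sigma(loop)$, $loop^{n} \circ loop$, $loop^{n} \circ \sigma(loop)$, $loop^{-n} \circ loop$, and $loop^{-n} \circ \sigma(loop)$, each discharged directly with the rules $trr$, $tlr$, $tt$, $tr$, $tsr$. You instead do a full structural induction over all terms built from $loop$ and $\rho$ by the operations $\sigma$ and $\tau$, which forces you to prove two auxiliary facts the paper never states: the concatenation lemma $\tau(loop^{m}, loop^{n}) =_{rw} loop^{m+n}$ for arbitrary $m,n$ (with the mixed-sign cancellation bookkeeping), and the inverse lemma $\sigma(loop^{n}) =_{rw} loop^{-n}$, which additionally requires the rules $ss$, $sr$ and the distribution rule $\sigma(\tau(t,s)) \rhd_{stss} \tau(\sigma(s),\sigma(t))$. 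What your extra work buys is genuine completeness: paths such as $\sigma(\tau(loop,loop))$ or $\tau$ of two composite paths are "freely generated" by $\sigma$ and $\tau$ but are not of the form (canonical path) $\circ$ (generator), so the paper's induction does not literally cover them, whereas your $\sigma(s)$ and $\tau(s,t)$ cases do; the paper's proof is shorter precisely because it implicitly assumes every generated path is already a word extended one letter at a time. One small point: your inverse lemma's base case needs the rule $\sigma(\rho) \rhd_{sr} \rho$, which you did not list among the rules you invoke, and your appeal to confluence and termination of $LND_{EQ}$-$TRS$ is not actually needed, since your cancellation argument terminates on its own and $rw$-equality (being an equivalence and a congruence, as rewriting is closed under contexts) already lets you chain the reductions in any order.
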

	
	We have said that from a $loop$, one can freely generate different paths by applying composition $\tau$ and the symmetry $\sigma$. Thus, one can, for example, obtain something as $loop \circ loop \circ loop^{-1} \circ loop...$. Our objective with this lemma is to show that, in fact, this path can be reduced to a path of the form $loop^{n}$, for $n \in \mathbb Z$.
	
	\begin{proof}
		The idea is to proceed by induction. We start from a base $\rho$. For the base case, it is trivially true, since we define it to be equal to $loop^{0}$. From $\rho$, one can construct more complex paths by composing with $loop$ or $\sigma(loop)$ on each step.
		We have the following induction steps:
		
		\begin{itemize}
			\item A path of the form $\rho$ concatenated with $loop$: We have $\rho \circ loop = \tau(loop,\rho) \underset{trr}{=} loop = loop^{1}$;
			\bigskip
			\item A path of the form $\rho$ concatenated with $\sigma(loop)$: We have $\rho \circ \sigma(loop) = \tau(\sigma(loop),\rho) \underset{trr}{=}  \sigma(loop) = loop^{-1}$
			\bigskip
			\item A path of the form $loop^{n}$ concatenated with $loop$: We have $loop^{n} \circ loop = loop^{n+1}$.
			\bigskip
			\item A path of the form $loop^{n}$ concatenated with $\sigma(loop)$: We have $loop^{n} \circ \sigma(loop)$ $= (loop^{n-1} \circ loop) \circ \sigma(loop) \underset{tt}{=} loop^{n-1} \circ (loop \circ \sigma(loop)) =$ $loop^{n-1} \circ (\tau(\sigma(loop), loop)) \underset{tsr}{=} loop^{n-1} \circ \rho = \tau(\rho, loop^{n-1}) \underset{tlr}{=} loop^{n-1}$
			\bigskip
			\item A path of the form $loop^{-n}$ concatenated with $loop$: We have $loop^{-n}$ = $loop^{-(n - 1)} \circ loop^{-1} = loop^{-(n - 1)} \circ \sigma(loop)$. Thus, we have $(loop^{-(n - 1)} \circ \sigma(loop)) \circ loop$ $ \underset{tt}{=}$ $loop^{-(n - 1)} \circ (\sigma(loop) \circ loop)$ $=$ $loop^{-(n-1)} \circ \tau(loop,\sigma(loop)) \underset{tr}{=}$ $=$ $loop^{-(n-1)} \circ \rho = \tau(\rho,loop^{-(n-1)}) \underset{tlr}{=} loop^{-(n-1)}$.
			\bigskip
			\item a path of the form $loop^{-n}$ concatenated with $\sigma(loop)$: We have $loop^{-n} \circ loop^{-1} = loop^{-(n + 1)}$ 
		\end{itemize}
		
		Thus, every path is of the form $loop^{n}$, with $n \in \mathbb Z$.
	\end{proof}

	This lemma shows that every path of the fundamental group can be represented by a path of the form $loop^{n}$, with $n \in \mathbb Z$.
	
	\begin{theorem}
		$\Pi_{1}(S,a) \simeq \mathbb{Z}$
	\end{theorem}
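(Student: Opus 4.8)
The plan is to construct an explicit group isomorphism $\varphi : \Pi_{1}(S,a) \to \mathbb{Z}$ and verify that it is a bijective homomorphism. The previous lemma does most of the heavy lifting: it guarantees that every element of $\Pi_{1}(S,a)$ is $rw$-equal to $loop^{n}$ for some $n \in \mathbb{Z}$, so every equivalence class $[r]_{rw}$ contains a canonical representative of the form $loop^{n}$. I would therefore define $\varphi([loop^{n}]_{rw}) = n$, and the bulk of the argument becomes checking that this assignment is well-defined, structure-preserving, injective, and surjective.

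\medskip

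\noindent First I would establish that $\varphi$ is \emph{well-defined}, which is the step I expect to be the main obstacle. The lemma tells us that each class has \emph{a} representative $loop^{n}$, but to get a function into $\mathbb{Z}$ I must know the exponent $n$ is \emph{unique}: that is, $loop^{m} =_{rw} loop^{n}$ implies $m = n$. The $LND_{EQ}$-$TRS$ is stated to be terminating and confluent, so each class has a unique normal form; the real content is showing that $loop^{m}$ and $loop^{n}$ are distinct normal forms whenever $m \neq n$, i.e. that no sequence of $rw$-contractions collapses two different powers of $loop$. This is exactly the place where the traditional development needs the universal cover of the circle and the code-decode machinery to prove $\pi_{1}(S^{1}) \simeq \mathbb{Z}$ is not degenerate; here the corresponding obligation is to argue that the rewriting rules only ever cancel adjacent $loop$ and $\sigma(loop)$ pairs (as in the inductive steps of the lemma) and so preserve the net exponent. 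I would make this precise by observing that every $rw$-rule relevant to words in $loop$ and $\sigma(loop)$ leaves the algebraic sum of signed occurrences invariant, so the exponent is an $rw$-invariant and hence $\varphi$ is well-defined.

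\medskip

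\noindent Next I would verify that $\varphi$ is a \textbf{homomorphism}. Recall the group operation is $[r]_{rw} \circ [s]_{rw} = [\tau(s,r)]_{rw}$. Taking representatives $loop^{m}$ and $loop^{n}$, the inductive reasoning of the lemma gives $\tau(loop^{n}, loop^{m}) =_{rw} loop^{m+n}$, so
\[
\varphi([loop^{m}]_{rw} \circ [loop^{n}]_{rw}) = \varphi([loop^{m+n}]_{rw}) = m+n = \varphi([loop^{m}]_{rw}) + \varphi([loop^{n}]_{rw}),
\]
which is the homomorphism condition. Here I would lean on the same cancellation computations already carried out in the proof of the lemma (the $tt$, $tr$, $tsr$, $tlr$, $trr$ rules), so no new redundancy arguments are needed.

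\medskip

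\noindent Finally I would dispatch \textbf{bijectivity}. Surjectivity is immediate: for any $n \in \mathbb{Z}$ the path $loop^{n}$ is a loop at $a$, so $[loop^{n}]_{rw} \in \Pi_{1}(S,a)$ and maps to $n$. Injectivity follows directly from the uniqueness of the exponent established in the well-definedness step: if $\varphi([loop^{m}]_{rw}) = \varphi([loop^{n}]_{rw})$ then $m = n$, whence $loop^{m} =_{rw} loop^{n}$ and the two classes coincide. Combining the homomorphism property with bijectivity yields that $\varphi$ is a group isomorphism, and therefore $\Pi_{1}(S,a) \simeq \mathbb{Z}$. The only genuinely delicate point in the whole argument is the invariance of the exponent under $rw$-equality; everything else reduces to bookkeeping with the concatenation identities already in hand.
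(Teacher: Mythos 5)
Your proposal follows the same route as the paper: both rely on the preceding lemma to reduce every loop to the canonical form $loop^{n}$, and both then set up the correspondence $loop^{n} \leftrightarrow n$ (the paper's $toPath$ and $toInteger$ are exactly your $\varphi^{-1}$ and $\varphi$). The difference is one of completeness rather than of strategy. The paper defines the two maps by recursion and dismisses the remainder with ``that they are inverses is a straightforward check''; it never addresses whether $toInteger$ is well defined on $rw$-classes, i.e. whether $loop^{m} =_{rw} loop^{n}$ forces $m = n$. You correctly single this out as the crux: it is the precise analogue of what the universal cover or the encode-decode machinery buys in the classical and HoTT proofs, and it is exactly the point at which the statement could degenerate (if two distinct powers of $loop$ were $rw$-equal, the group would be a proper quotient of $\mathbb{Z}$). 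Your proposed tool --- the net signed exponent of occurrences of $loop$ is invariant under each applicable $rw$-rule ($tr$, $tsr$, $trr$, $tlr$, $tt$, $ss$, $sr$, $stss$, \ldots), hence is an invariant of $rw$-equality --- is the right one, and the termination and confluence of $LND_{EQ}$-$TRS$ cited in the paper support making it precise. To actually finish, that invariance must be checked against the full rule list in Appendix B; note that rule 39 as printed, $\tau ({\cal C}[\sigma(u)] , \tau ({\cal C}[u] , v)) \triangleright_{tst} u$, would violate it, but this is evidently a typo for $v$. Your explicit verification of the homomorphism property is likewise something the paper leaves implicit. In short: same isomorphism, but your version supplies the injectivity/well-definedness argument that the paper's proof takes for granted.
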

	
	To prove this theorem, one could use the approach proposed in \cite{hott}, defining a pair of encode and decode functions. Nevertheless, since our computational paths are already part of the syntax, one does not need to rely on this kind of approach to simulate a path-space. We can work directly with the concept of path.
	
	\begin{proof}
		The proof is done by establishing a function from $\Pi_{1}(S,a)$ to $\mathbb{Z}$ and then an inverse from $\mathbb{Z}$ to $\Pi_{1}(S,a)$. Since we have access to the previous lemma, this task is not too difficult. The main idea is that the $n$ in $loop^{n}$ means the amount of times one goes around the circle, while the sign gives the direction (clockwise or anti-clockwise). In other words, it is the $winding$ number. Since we have shown that every path of the fundamental group is of the form $loop^{n}$, with $n \in \mathbb Z$, then we just need to translate $loop^{n}$ to an integer $n$ and an integer $n$ to a path $loop^{n}$. We define two functions, $toInteger: \Pi_{1}(S,a) \rightarrow \mathbb Z$ and $toPath: \mathbb Z \rightarrow \Pi_{1}(S,a)$:
		
		\begin{itemize}
			\item $toInteger$: To define this function, we use two functions defined in $\mathbb Z$: the successor function $succ$ and the predecessor function $pred$. We define $toInteger$ as follows. Of course, we use directly the fact that every path of $\Pi_{1}(S,a)$ is of the form $loop^{n}$ with $n \in \mathbb Z$:
			
			\begin{equation*}
			toInteger: \begin{cases}
			toInteger(loop^n \equiv \rho) = 0 \quad \quad \quad \quad \quad \quad \quad \quad \quad \quad \enskip n = 0          \\
			toInteger(loop^{n}) = succ(toInteger(loop^{n-1})) \quad \quad n > 0  \\
			toInteger(loop^{n}) = pred(toInteger(loop^{n+1})) \quad \quad n < 0 \\
			\end{cases}
			\end{equation*}
			
			\item $toPath$: We just need to transform an integer $n$ into a path $loop^{n}$:
			
			\begin{equation*}
			toPath: \begin{cases}
			toPath(n) = \rho \quad \quad \quad \quad \quad \quad \quad \quad \quad\quad \enskip n = 0 \\
			toPath(n) = toPath(n - 1) \circ loop \quad \quad n > 0 \\
			toPath(n) = toPath(n + 1) \circ \sigma(loop) \quad n < 0 \\
			\end{cases}
			\end{equation*}
		\end{itemize}
		
		That they are inverses is a straightforward check. Therefore, we have $\Pi_{1}(S,a) \simeq \mathbb{Z}$.
	\end{proof}
	
	The technique used to obtain the fundamental group of the circle will provide us with the means to obtain the fundamental group of the cylinder and the M{\"o}bius Band. In these cases, all proofs used in the circle can be used, thus we will focus on knowing what kinds of loops are in our interest and on getting the bijections. In the case of the torus and projective real plane, we will need more advanced analysis and tests.

	
	\subsection{Fundamental Group of Cylinder $\mathbb{C}$}
	
	We are interested in getting the fundamental group of a cylindrical surface $\mathbb{C}$. To do this, we need to choose \textit{loops$_{x_{0}}$} which are of interest to our study. All the \textit{loops$_{x_{0}}$} which do not go, at least, one full turn in the cylinder are homotopic to the point $x_{0}$, as shown in \textbf{figure 1}. When we write \textit{loops}  we refer to \textit{loops$_{x_{0}}$}.

\begin{figure}[!htb]
\centering
\includegraphics[width=0.4\columnwidth]{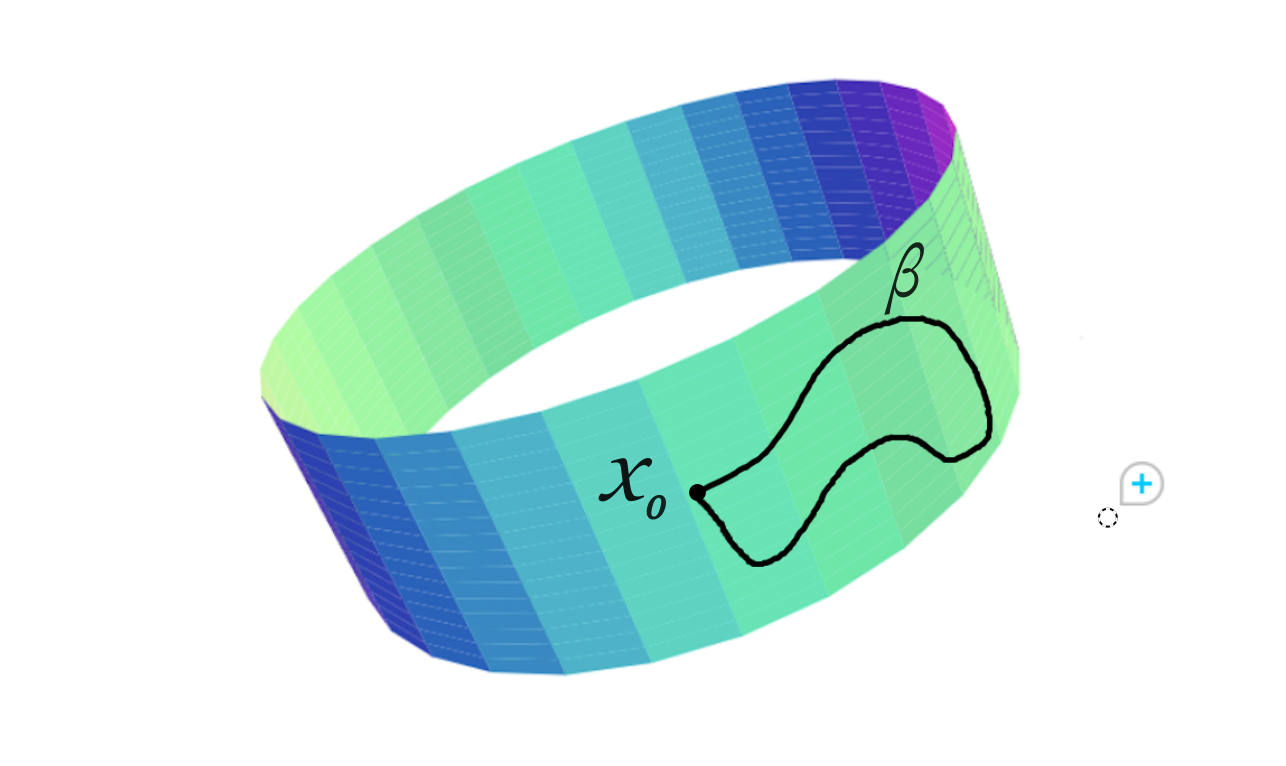}
\caption{ $\beta$ is a \textit{loop} in cylinder homotopic to $x_{0}$} 
\label{Rotulo4}
\end{figure}

Thus, loops that spin around the cylinder, such as the curve $\alpha$ in \textbf{figure 2}, cannot deform continuously to the point and therefore they are the loops of our interest.

\begin{figure}[!htb]
\centering
\includegraphics[width=0.4\columnwidth]{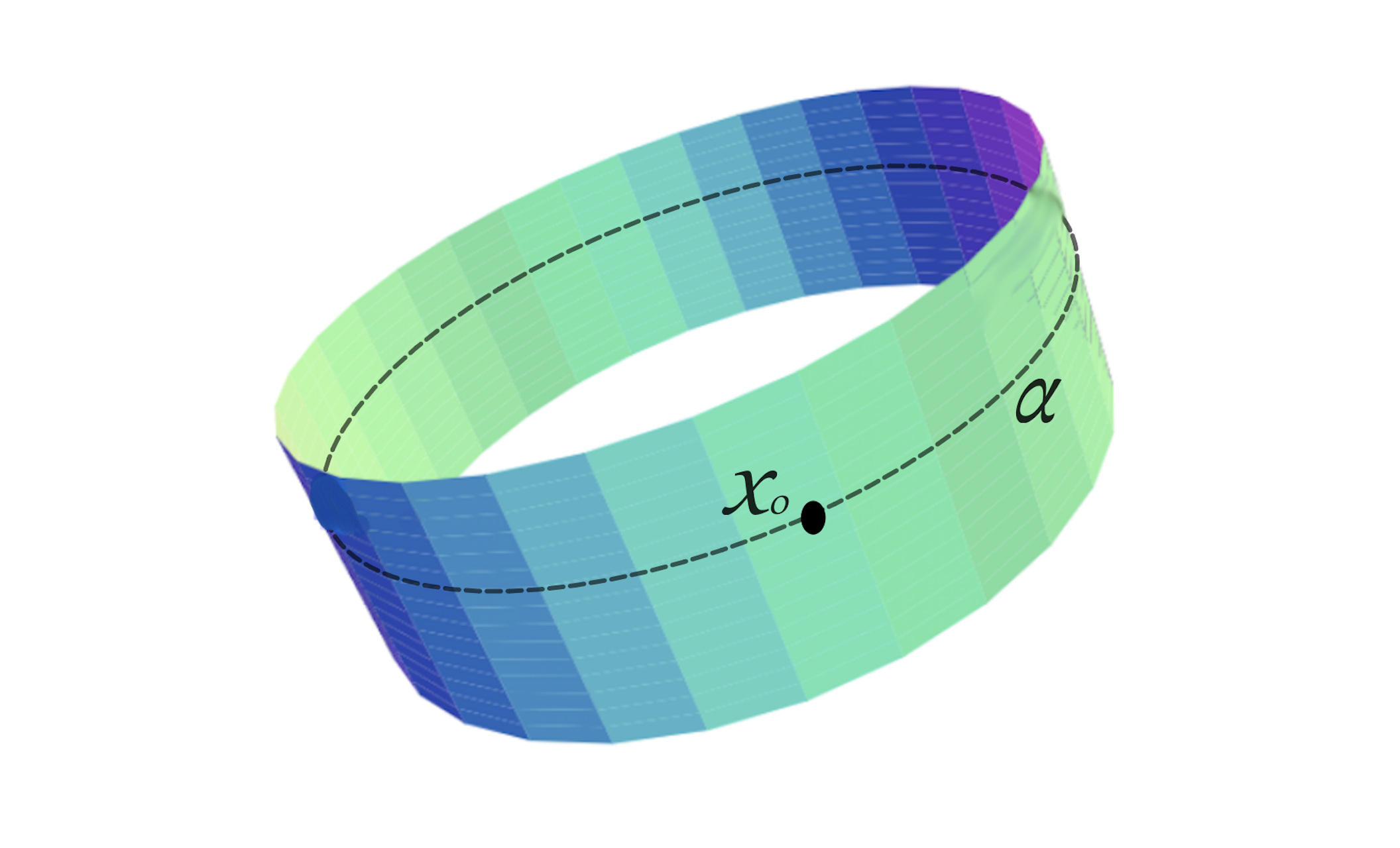}
\caption{$\alpha \in \Pi_{1}(\mathbb{C},x_{0})$} 
\label{Rotulo5}
\end{figure}

Similarly to the case of $\Pi_{1}(S^{1},x_{0})$, we can prove that all paths generated by $\alpha$ are $rw$-equal to a path \textit{loop$^{n}$} 
for a $n\in \mathbb{Z}$ and prove that $\Pi_{1}(\mathbb{C},x_{0})$ is a  group. Furthermore, using the maps $toPath$ and $toInteger$, we define the bijection between $\mathbb{Z}$ and $\Pi_{1}(\mathbb{C},x_{0})$.

\begin{theorem}
$\Pi_{1}\left(\mathbb{C},x_{0}\right) \simeq \mathbb{Z}.$
\end{theorem}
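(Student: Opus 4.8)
The plan is to reproduce, almost verbatim, the three-part argument that settled the circle, since the cylinder admits a deformation retraction onto its central circle and hence a loop at $x_{0}$ carries no homotopical information beyond how many times it winds around the tube. Consequently the single generating loop $\alpha$ of \textbf{figure 2} plays exactly the role that $loop$ played for $S^{1}$: every contractible loop (such as $\beta$ of \textbf{figure 1}) is $rw$-equal to the reflexivity path $\rho = \alpha^{0}$, while every loop of interest is built from $\alpha$ and $\sigma(\alpha)$ by composition. First I would fix the group operation $r \circ s = \tau(s,r)$ and verify the group axioms for $(\Pi_{1}(\mathbb{C}, x_{0}), \circ)$; these checks are identical to the circle case, since they depend only on the $rw$-rules $tr$, $tsr$, $tlr$, $trr$ and $tt$, which are properties of $\tau$, $\sigma$ and $\rho$ and are insensitive to the ambient surface.

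Next I would prove the analogue of the reduction lemma: every path generated from $\alpha$ by repeated application of $\tau$ and $\sigma$ is $rw$-equal to $\alpha^{n}$ for some $n \in \mathbb{Z}$. The argument is the same induction on the number of composition steps, starting from $\rho = \alpha^{0}$ and, at each step, appending either $\alpha$ or $\sigma(\alpha)$; the cases $\alpha^{n} \circ \alpha = \alpha^{n+1}$, $\alpha^{n} \circ \sigma(\alpha) =_{rw} \alpha^{n-1}$, and their negative counterparts reduce by $tt$, $tsr$, $tr$ and $tlr$ exactly as before. With this lemma in hand, every element of $\Pi_{1}(\mathbb{C}, x_{0})$ has a canonical representative $\alpha^{n}$.

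Finally I would establish the isomorphism by exhibiting mutually inverse maps $toInteger : \Pi_{1}(\mathbb{C}, x_{0}) \rightarrow \mathbb{Z}$ and $toPath : \mathbb{Z} \rightarrow \Pi_{1}(\mathbb{C}, x_{0})$, defined by the same recursion on $succ$ and $pred$ used for the circle, reading $n$ off the exponent of the canonical form $\alpha^{n}$ and sending $n$ back to $\alpha^{n}$. That these are inverse to one another, and that $toPath$ carries addition to $\circ$, is then a routine check using the reduction lemma.

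The one step that genuinely requires attention — rather than a transcription of the circle proof — is the very first claim: that only the winding loops contribute, i.e. that a loop which does not make a full turn around the cylinder is $rw$-equal to $\rho$. In the syntactic setting this is the counterpart of the deformation retraction $\mathbb{C} \simeq S^{1}$, and it must be justified either by building into the higher inductive presentation of $\mathbb{C}$ that such loops are identified with the constant path, or by appealing to the retraction to transport the circle's classification. Once this reduction to a single generator is secured, the remainder of the proof is a faithful copy of the $S^{1}$ argument and presents no new obstacle.
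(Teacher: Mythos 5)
Your proposal follows essentially the same route as the paper: the paper likewise transfers the circle's argument wholesale to the cylinder --- the group structure, the reduction of every loop to a canonical form $loop^{n}$, and the mutually inverse maps $toPath$ and $toInteger$ defined by the same $succ$/$pred$ recursion --- with the identification of non-winding loops (the $\beta$ of \textbf{figure 1}) with $\rho$ handled only informally, by appeal to the figures. The one step you flag as requiring genuine attention, namely a formal higher inductive presentation of $\mathbb{C}$ or an explicit appeal to the retraction onto the central circle, is indeed left implicit in the paper (which, unlike for $S^{1}$, $\mathbb{T}^{2}$ and $\mathbb{P}^{2}$, never states a formal definition of the cylinder as a type), so your treatment is, if anything, more careful on that point.
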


\begin{proof}

Proving this is equivalent to finding a bijection between spaces.


Consider the map:

\begin{eqnarray*}
toPath: \mathbb{Z} &\longrightarrow&  \Pi_{1}\left (\mathbb{C},x_{0}\right)\\
n  &\longrightarrow&  toPath(n).
\end{eqnarray*}

Dedined by:

$toPath(n) = \begin{cases}
toPath(0)=\rho &  \\
toPath(n)= toPath(n-1)\circ loop^{1}    \hspace{0.4cm}& n>0 \\
toPath(n)= toPath(n+1)\circ \sigma(loop^{1})    \hspace{0.4cm}& n<0. \\
\end{cases}$
$$$$
Now, consider the map:

\begin{eqnarray*}
toInteger:  \Pi_{1}\left (\mathbb{C},x_{0}\right)  &\longrightarrow& \mathbb{Z} \\
loop^{n}  &\longrightarrow&  n.
\end{eqnarray*}

Defined by:

$toInteger(n) = \begin{cases}

toInteger(loop^{0}=\rho)=0&  \\
toInteger(loop^{n})= succ(toInteger(loop^{n-1})) \hspace{0.4cm}& n>0 \\
toInteger(loop^{n})= pred(toInteger(loop^{n+1}))    \hspace{0.4cm}& n<0 \\
\end{cases}$
$$$$

That way, we have the desired isomorphism.
\end{proof}

\subsection{Fundamental Group of the M{\"o}bius Band - $\Pi_{1}(\mathbb{K}^{2},x_{0})$}	
	
In this case, we will again disregard all \textit{loops} which are homotopic to the constant $x_{0}$. Thus, the \textit{loops} of our interest are those that spin the surface in a fixed direction, such as the \textit{loops} denoted by \textit{$\alpha$} in \textbf{figure 3}.

\begin{figure}[!htb]
\centering
\includegraphics[width=0.4\columnwidth]{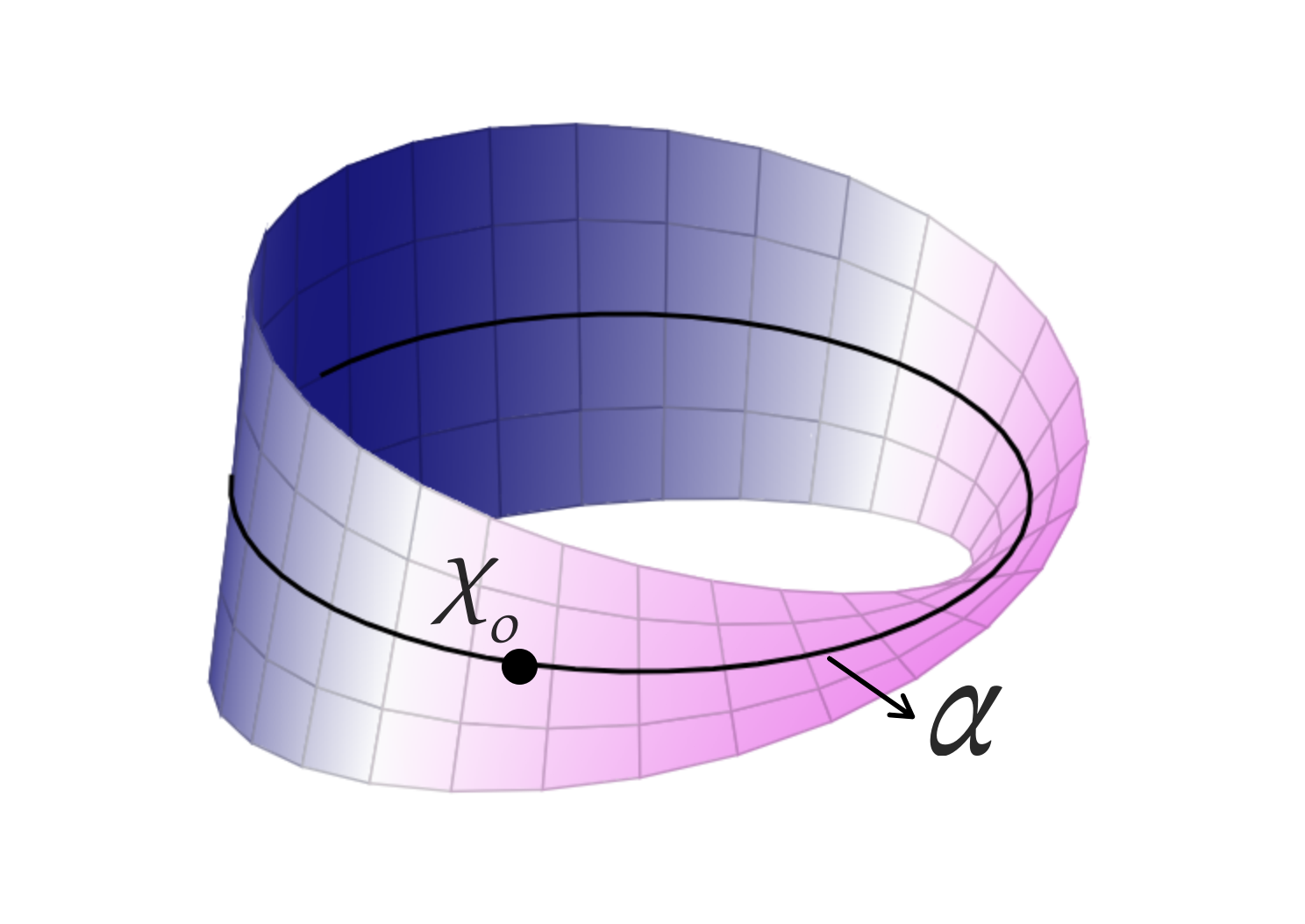}
\caption{$\alpha \in \Pi_{1}(\mathbb{K}^{2},x_{0})$} 
\label{Rotulo6}
\end{figure}

\begin{proposition}
$\Pi_{1}\left(\mathbb{K}^{2},x_{0}\right) \simeq \mathbb{Z}.$
\end{proposition}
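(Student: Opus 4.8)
The plan is to mirror exactly the argument already carried out for the circle $S^1$ and the cylinder $\mathbb{C}$, since the statement as worded asserts $\Pi_{1}(\mathbb{K}^{2},x_{0}) \simeq \mathbb{Z}$. The key geometric observation, reflected in \textbf{figure 3}, is that the \emph{loops} of interest on the M\"obius band are exactly those that wind around the central core of the band, while every loop that does not go around at least once is homotopic to the base point $x_{0}$ and hence $rw$-equal to $\rho$. So I would begin by isolating a single generating loop $\alpha$ with $x_{0} \underset{\alpha}{=} x_{0}$, exactly as $loop$ was the generator for $S^1$.

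First I would establish the analogue of the earlier lemma: every path generated from $\alpha$ by the closure operations $\tau$ and $\sigma$ is $rw$-equal to a path of the form $\alpha^{n}$ with $n \in \mathbb{Z}$. This requires no new ideas; the induction is the same one used in the circle proof, driven entirely by the $LND_{EQ}$-$TRS$ rules $tt$ (associativity of $\tau$), $tr$, $tsr$, $tlr$ and $trr$ that let one cancel adjacent $\alpha$ and $\sigma(\alpha)$ factors and collapse composition with $\rho$. Since these reductions depend only on the formal axioms of the equality theory and not on the particular surface, the induction steps transfer verbatim with $loop$ replaced by $\alpha$. I would then note, again in parallel with the circle case, that $(\Pi_{1}(\mathbb{K}^{2},x_{0}), \circ)$ forms a group under $r \circ s = \tau(s,r)$, with $\rho$ as identity and $\sigma(r)$ as the inverse of $r$, all verified by the same $rw$-equalities.

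The isomorphism itself is then obtained by reusing the two maps $toInteger: \Pi_{1}(\mathbb{K}^{2},x_{0}) \rightarrow \mathbb{Z}$ and $toPath: \mathbb{Z} \rightarrow \Pi_{1}(\mathbb{K}^{2},x_{0})$ defined recursively exactly as in the circle and cylinder proofs, sending $\alpha^{n}$ to $n$ and $n$ to the $n$-fold composite of $\alpha$ (using $\sigma(\alpha)$ for negative $n$). Checking that $toInteger$ and $toPath$ are mutually inverse is a direct recursion on $n$, using the lemma to guarantee that every element of the group is already in the normal form $\alpha^{n}$ on which $toInteger$ is defined. That they are group homomorphisms follows because $toPath(m+n) = toPath(m) \circ toPath(n) = \alpha^{m+n}$ up to $rw$-equality, which is precisely the content of the concatenation cases in the lemma.

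The main obstacle I expect is not in the formal path-algebra, which is identical to the circle, but in justifying at the geometric level why only a \emph{single} generator is needed here. On the M\"obius band the relevant loop is the one running along the band, and although traversing the band's core once involves the orientation-reversing identification, the \emph{loops} selected for $\Pi_{1}$ (those winding in the fixed direction shown in \textbf{figure 3}) compose freely without imposing any additional relation, so no nontrivial quotient appears and the group stays free on one generator, i.e. isomorphic to $\mathbb{Z}$. I would therefore spend care making the choice of admissible loops precise enough that the earlier lemma applies unchanged, after which the isomorphism $\Pi_{1}(\mathbb{K}^{2},x_{0}) \simeq \mathbb{Z}$ follows by the same straightforward check of inverses as before.
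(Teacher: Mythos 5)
Your proposal matches the paper's proof exactly: the paper likewise argues ``similarly to the case of $\Pi_{1}(\mathbb{S}^{1},x_{0})$'', transferring the normal-form lemma ($rw$-equality to $loop^{n}$), the group structure, and the $toPath$/$toInteger$ bijection verbatim to the M\"obius band with its single generating loop $\alpha$. If anything, your write-up is more careful than the paper's one-paragraph proof, particularly in flagging why the orientation-reversing identification imposes no extra relation on the chosen loops.
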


\begin{proof}

Similarly to the case of $\Pi_{1}(\mathbb{S}^{1},x_{0})$, we can prove that all paths generated by $\alpha$ are $rw$-equal to a path \textit{loop$^{n}$} 
for a $n\in \mathbb{Z}$ and prove that $\Pi_{1}(\mathbb{K}^{2},x_{0})$ is a  group. Furthermore, using the maps $toPath$ and $toInteger$, we have the bijection between $\mathbb{Z}$ and $\Pi_{1}(\mathbb{K}^{2},x_{0})$. That way, we have the desired isomorphism.

\end{proof}


\subsection{Fundamental Group of the Torus - $\Pi_{1}(\mathbb{T}^{2},x_{0})$}
	
Consider $\mathbb{T}^{2}$ as the surface known as Torus and the point $x_{0}\in \mathbb{T}^{2}$. We will prove using computational paths that the fundamental group of the torus is isomorphic to $\mathbb{Z} \times \mathbb{Z}$.

Before we proceed, we need to look at some instances of \textit{loop$_{x_0}$} in $\mathbb{T}^{2}$. Consider the figure below:

\begin{figure}[!htb]
\centering
\includegraphics[width=0.3\columnwidth]{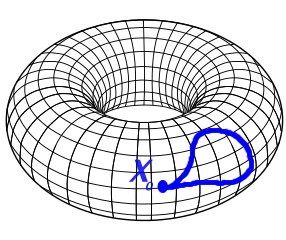}
\caption{ \textit{loop} homotopic to constant path $x_{0}$ in Torus.}
\label{Rotulo1}
\end{figure}

In \textbf{figure 4} above, we have an example of a \textit{loop$_{x_0}$} that is not particularly interesting because it continuously deforms to the constant path $\rho$. Thus, these types of loops will be discarded in our study.

Therefore, we will be interested in working with \textit{loops} that are not homotopic to the base point $x_{0}$, like \textit{loops} $\alpha$ and $\beta$. These loops will be the generators of $\mathbb{T}^{2}$ as shown in \textbf{figure 5}.

\begin{figure}[!htb]
\centering
\includegraphics[width=0.35\columnwidth]{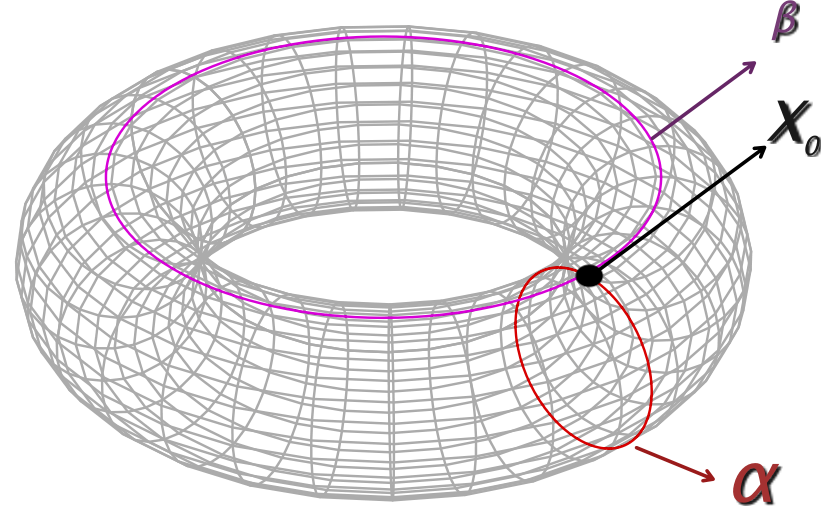}
\caption{Paths $\alpha$ and $\beta$ with base point $x_{0}$ in Torus} 
\label{Rotulo2}
\end{figure}

\begin{definition}[{vertical loop}]
   We define as vertical loop the path (loop) that passes through the inner part of $\mathbb{T}^{2}$ in the vertical direction. In \textbf{figure} 5, this loop is denoted by $\alpha$.
   
\end{definition}

\begin{definition}[{horizontal loop}]
We define as horizontal loop the path (loop) that passes the inner part of $\mathbb{T}^{2}$ in the horizontal direction. In \textbf{figure 5}, this \textit{loop} is denoted by $\beta$.

\end{definition}

Note that this two \textit{loops} are not of the type $\rho$ (homotopic to constant $x_{0}$). Furthermore, they generate $\mathbb{T}^{2}$. In what follows, we define and denote by: $\alpha^{n}=$\textit{loop$^{n}_{v}$} the path composed by $n$ vertical loops and by $\beta^{m}=$\textit{loop$^{m}_{h}$} the path composed by $m$ horizontal loops. 

We now give the formal definition of the torus in homotopy type theory:

\begin{definition}
 The torus $\mathbb{T}^{2}$ is generated by:
 
 \item[(i)] Two paths $\alpha$ and $\beta$ such that: $x_{0}\underset{\alpha}{=} x_{0}$ \hspace{0.2cm} and \hspace{0.2cm}  $x_{0}\underset{\beta}{=} x_{0}$.
 
 \item[(ii)] One path $co$ that establishes $\beta\alpha\underset{co}{=}\alpha\beta$, i.e., a term $co:Id(\beta\alpha,\alpha\beta)$.

\end{definition}

Given a point $x_{0}$, we can slice the Torus and represent it as a rectangle whose laterals are the loops $\alpha$ and $\beta$, how show in figure 6.

\begin{figure}[!htb]
\centering
\includegraphics[width=0.4\columnwidth]{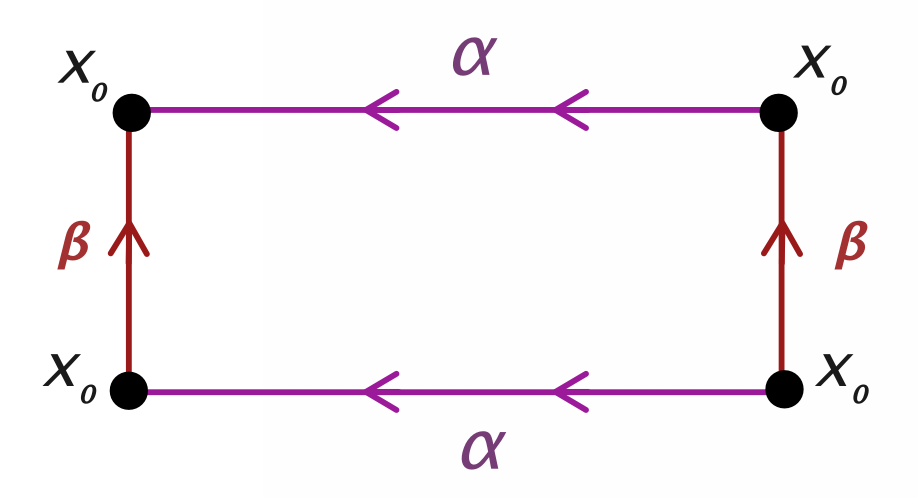}
\caption{Sliced Torus with oriented paths $\alpha$ and $\beta$ } 
\label{Rotulo3}
\end{figure}

Consider the following path in the figure: $\tau \left (\beta,\alpha,\sigma( \beta),\sigma( \alpha )\right)= \beta* \alpha * \beta^{-1} * \alpha^{-1}$:

\begin{proposition}
The aforementioned path is $rw$-equal to the reflexive path.
\end{proposition}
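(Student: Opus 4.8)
The plan is to read the displayed expression as the commutator $\beta * \alpha * \beta^{-1} * \alpha^{-1}$, with $\beta^{-1} = \sigma(\beta)$ and $\alpha^{-1} = \sigma(\alpha)$, and to collapse it to $\rho$ by first invoking the defining commutativity path $co$ of the torus and then applying the cancellation rules of $LND_{EQ}$-$TRS$. The only input beyond the term rewriting system is clause (ii) of the definition of $\mathbb{T}^{2}$, namely the path $co$ witnessing $\beta\alpha =_{co} \alpha\beta$; everything else is routine bookkeeping with $\tau$ and $\sigma$.

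Concretely, I would proceed as follows. Reading the path with the innermost composition to the left, the subterm $\beta * \alpha$ is exactly the loop $\beta\alpha$, so I would first use $co$ to replace it by $\alpha * \beta$, turning the whole expression into $\alpha * \beta * \sigma(\beta) * \sigma(\alpha)$. Next, I would re-associate with the associativity rule $tt$ so as to bring together the adjacent inverse pair, exposing $\beta * \sigma(\beta)$. That inner pair contracts: $\tau(\beta,\sigma(\beta)) \rhd_{tr} \rho$. The resulting reflexive path is then absorbed by $trr$ (or $tlr$), leaving $\alpha * \sigma(\alpha) = \tau(\alpha,\sigma(\alpha))$, which contracts to $\rho$ by a second application of $tr$. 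Chaining these contractions yields $\beta * \alpha * \beta^{-1} * \alpha^{-1} =_{rw} \rho$, as required.

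The delicate point — and the step I expect to need the most care — is the very first one. The commutativity relation $co$ is a term of an identity type between two loops, i.e. a two-dimensional path, whereas $rw$-equality as defined in Section 3 is generated only by the rules of $LND_{EQ}$-$TRS$. I would therefore make explicit that rewriting $\beta\alpha$ to $\alpha\beta$ is licensed precisely by $co$, treating $co$ as an admissible rewrite on sub-paths (equivalently, working modulo the $rw$-equivalence enlarged by the generator $co$). Once this substitution is granted, every subsequent step lies strictly inside the terminating and confluent system, so no further subtlety arises and the reduction to $\rho$ is forced.

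It is worth emphasising that this proposition is, in effect, the computational-path restatement of the torus being \emph{commutative}: the triviality of the commutator $\beta * \alpha * \beta^{-1} * \alpha^{-1}$ is exactly the fact that will later be needed to obtain $\Pi_{1}(\mathbb{T}^{2},x_{0}) \simeq \mathbb{Z}\times\mathbb{Z}$, and the proof merely unwinds $co$ into the explicit cancellation witnessed by the rules $tt$, $tr$ and $trr$.
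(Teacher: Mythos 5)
Your proof is correct and takes essentially the same route as the paper: a single application of the generator $co$ followed by cancellation of the inverse pairs with the $LND_{EQ}$-$TRS$ rules ($tt$, $tr$/$tsr$, $trr$/$tlr$), exactly as in the paper's chain ending in $\tau(\rho,\rho) \rhd_{trr} \rho$. The only difference is cosmetic: you apply $co$ literally to the prefix $\beta*\alpha$, yielding the telescoping term $\alpha*\beta*\sigma(\beta)*\sigma(\alpha)$, whereas the paper uses $co$ to transpose the middle pair and cancels the groupings $\tau(\sigma(\beta),\beta)$ and $\tau(\sigma(\alpha),\alpha)$ separately; your reading is, if anything, the more faithful use of $co$ as stated in the definition of $\mathbb{T}^{2}$, and your explicit remark that $co$ must be admitted as an extra rewrite generator alongside the $rw$-rules makes precise what the paper leaves implicit.
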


\begin{proof}
Indeed,

\begin{eqnarray*}
\beta*\alpha *\beta^{-1}*\alpha^{-1} &\underset{co}{=}& \beta*\beta^{-1} *\alpha * \alpha^{-1} \\
 &=& \tau \left(\tau(\sigma(\beta),\beta),\tau(\sigma(\alpha)\alpha,)\right)  \\
 &\underset{tst}{=}& \tau(\rho,\rho)\\
 &\underset{trr}{=}&\rho\\
\end{eqnarray*} 

 and thus,  $\tau \left (\beta,\alpha,\sigma( \beta),\sigma( \alpha )\right)= \beta* \alpha * \beta^{-1} * \alpha^{-1} = \rho$ \\
\end{proof}


 \begin{lemma}
 
All paths in $\mathbb{T}^{2}$ are \textit{rw-equal} to $\beta^{n} \alpha^{m}$, with $m,n \in \mathbb{Z}$ and $\beta^{0}, \alpha^{0}=\rho$.

\end{lemma}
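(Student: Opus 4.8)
The plan is to adapt the proof of the circle lemma, now carrying two independent exponents and using the commutation path $co$ to pull every occurrence of $\beta^{\pm1}$ to the left of the block of $\alpha$'s. First I would regard an arbitrary path of $\mathbb{T}^2$ as a word $g_1 * \cdots * g_k$ whose letters $g_i$ are among the generators $\alpha$, $\sigma(\alpha)=\alpha^{-1}$, $\beta$, $\sigma(\beta)=\beta^{-1}$ (the reflexive path $\rho$ being absorbed by $trr$ and $tlr$, and any bracketing flattened by associativity $tt$), and induct on the length $k$. The base case is immediate, since $\rho = \beta^0\alpha^0$.

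For the inductive step I assume a word of length $k-1$ has been reduced to the normal form $\beta^n\alpha^m$ and examine the effect of appending one further letter. Appending $\alpha$ or $\alpha^{-1}$ is handled exactly as in the circle lemma and only changes $m$: we have $\beta^n\alpha^m * \alpha = \beta^n\alpha^{m+1}$, while $\beta^n\alpha^m * \alpha^{-1} =_{rw} \beta^n\alpha^{m-1}$ by the same cancellation used there, namely regrouping with $tt$, collapsing $\alpha * \alpha^{-1}$ (or $\alpha^{-1}*\alpha$) to $\rho$ with $tr$/$tsr$, and absorbing $\rho$ with $tlr$/$trr$. These two cases leave $n$ unchanged.

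The substantive cases are appending $\beta$ or $\beta^{-1}$, where the new letter sits to the right of $\alpha^m$ and must be commuted past it. The heart of the argument is therefore a commutation sublemma: for every $m \in \mathbb{Z}$ one has $\alpha^m * \beta =_{rw} \beta * \alpha^m$ and $\alpha^m * \beta^{-1} =_{rw} \beta^{-1} * \alpha^m$. I would prove it by an inner induction on $m$. The instance $\alpha * \beta =_{rw} \beta * \alpha$ is just the symmetry $\sigma(co)$ of the generating path (equivalently, the Proposition preceding this lemma, which shows $\beta * \alpha * \beta^{-1} * \alpha^{-1} =_{rw} \rho$, records the same relation). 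The positive step follows from $\alpha^{m+1} * \beta = \alpha^m * (\alpha * \beta) =_{rw} \alpha^m * (\beta * \alpha) =_{rw} (\alpha^m * \beta) * \alpha =_{rw} (\beta * \alpha^m) * \alpha = \beta * \alpha^{m+1}$, using associativity ($tt$) and the inner hypothesis; the negative instances are obtained by conjugating $\alpha * \beta =_{rw} \beta * \alpha$ by $\alpha^{-1}$ on both sides and cancelling, and the statements for $\beta^{-1}$ follow symmetrically. Granting the sublemma, appending $\beta$ gives $\beta^n\alpha^m * \beta =_{rw} \beta^n * (\beta * \alpha^m) = \beta^{n+1}\alpha^m$, and appending $\beta^{-1}$ gives $\beta^{n-1}\alpha^m$, which closes the outer induction.

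I expect the main obstacle to be the commutation sublemma for negative exponents: deriving $\alpha^{-1}*\beta =_{rw} \beta * \alpha^{-1}$ and its iterates from the single relation carried by $co$ requires combining it with the inverse-cancellation rules in the correct order and checking that every associative regrouping is a genuine $rw$-reduction. Once commutation holds uniformly in $m$ for both $\beta$ and $\beta^{-1}$, the remaining bookkeeping is identical to the circle case.
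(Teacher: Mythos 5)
Your proposal follows essentially the same route as the paper: an induction in which an arbitrary path is reduced letter by letter to the normal form $\beta^{n}\alpha^{m}$, with the letters $\alpha^{\pm 1}$ handled exactly as in the circle lemma and the rules $tt$, $tr$, $tsr$, $trr$, $tlr$ doing the cancellation bookkeeping. The one genuine difference is your commutation sublemma ($\alpha^{m} * \beta^{\pm 1} =_{rw} \beta^{\pm 1} * \alpha^{m}$ for all $m \in \mathbb{Z}$). The paper never states or proves such a statement: in its inductive steps it simply writes $\beta^{n}\alpha^{m} \underset{co}{=} \alpha^{m}\beta^{n}$, applying the generating path $co$ --- which only establishes $\beta\alpha = \alpha\beta$ for the single generators --- to whole blocks of exponents in one step. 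Your inner induction on $m$ (base case $co$, positive step via associativity $tt$ plus the inner hypothesis, negative exponents by conjugation and cancellation with $tsr$/$tr$ followed by absorption of $\rho$ with $tlr$/$trr$) is precisely the justification that this block-wise use of $co$ requires, so your version is the more rigorous of the two; its only cost is the extra nested induction. Apart from this, the two arguments coincide, down to which rewrite rules are invoked at each cancellation, so the ``main obstacle'' you flag is real but is exactly the point your sublemma already resolves.
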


\begin{proof}
Consider the following cases:

\item[(i)] Base case: $\beta^{0}\alpha^{0}=\rho$.
\item[(ii)] $\rho\circ \alpha=\tau(\alpha,\rho) \underset{trr}{=}\alpha =\beta^{0}\alpha^{1}.$
\item[(iii)] $\rho\circ \beta=\tau(\beta,\rho) \underset{trr}{=}\beta =\beta^{1}\alpha^{0}.$
\item[(iv)] $\rho\circ \alpha^{-1}=\tau(\sigma(\alpha),\rho) \underset{trr}{=}\sigma(\alpha) =\beta^{0}\alpha^{-1}.$
\item[(v)] $\rho\circ \beta^{-1}=\tau(\sigma(\beta),\rho) \underset{trr}{=}\sigma(\beta) =\beta^{-1}\alpha^{0}.$

Assuming, by the induction hypothesis, that every path is \textit{rw-equal}  $\beta^{n} \alpha^{m}$, we have:

\item[(1)] $\rho\circ \beta^{n}\alpha^{m}=\tau(\beta^{n}\alpha^{m},\rho) \underset{trr}{=}\beta^{n}\alpha^{m}.$
\item[(2)] $\alpha\circ \beta^{n}\alpha^{m}\underset{co}{=} \alpha\circ \alpha^{m}\beta^{n}=\alpha^{m+1}\beta^{n}\underset{co}{=}\beta^{n}\alpha^{m+1}.$
\item[(3)]  $\beta\circ \beta^{n}\alpha^{m}=\beta^{n+1}\alpha^{m} =\beta^{n}\alpha^{m+1}.$

\item[(4)] $\beta^{-1}\circ \beta^{n}\alpha^{m} =(\beta^{-1}\circ (\beta\circ \beta^{n-1}))\alpha^{m}\underset{tt}{=}((\beta^{-1}\circ \beta)\circ \beta^{n-1})\alpha^{m}\underset{tsr}{=}(\rho\circ\beta^{n-1})\alpha^{m}=\beta^{n-1}\alpha^{m}.$

\item[(4)] $\alpha^{-1}\circ \beta^{n}\alpha^{m}\underset{co}{=} \alpha^{-1}\circ \alpha^{m}\beta^{n}=(\alpha^{-1}\circ (\alpha\circ \alpha^{m-1}))\beta^{n}\underset{tt}{=} ((\alpha^{-1}\circ \alpha)\circ \alpha^{m-1})\beta^{n}\underset{tsr}{=}(\rho\circ\alpha^{m-1})\beta^{n}=\alpha^{m-1}\beta^{n} \underset{co}{=}\beta^{n}\alpha^{m-1}.$

So all paths in $\mathbb{T}^{2}$ are \textit{rw-equal} to $\beta^{n} \alpha^{m}.$
\end{proof}

\begin{proposition}
$ \left( \Pi_{1}(\mathbb{T}^{2},x_{0}),\circ \right)$ is a group.
\end{proposition}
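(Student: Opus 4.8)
The plan is to follow verbatim the template already used to show that $(\Pi_{1}(S^{1},a),\circ)$ is a group, since the group structure on $\Pi_{1}(\mathbb{T}^{2},x_{0})$ arises from exactly the same path-theoretic axioms of the $LND_{EQ}$-$TRS$, not from any feature peculiar to the circle. First I would fix the group operation: given loops $x_{0} \underset{r}{=} x_{0}$ and $x_{0} \underset{s}{=} x_{0}$ in $\mathbb{T}^{2}$, define $r \circ s = \tau(s,r)$. I would then verify the four group conditions in turn, in each case exhibiting the explicit $rw$-reduction that witnesses the required equality of paths.

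For closure, $\tau(s,r)$ is itself a computational path $x_{0} \underset{\tau(s,r)}{=} x_{0}$, hence a member of the structure. For the identity I would take $\rho$ and check $r \circ \rho = \tau(\rho,r) \underset{tlr}{=} r$ together with $\rho \circ r = \tau(r,\rho) \underset{trr}{=} r$. For inverses I would claim $\sigma(r)$ is the inverse of $r$ and reduce $\sigma(r) \circ r = \tau(r,\sigma(r)) \underset{tr}{=} \rho$ and $r \circ \sigma(r) = \tau(\sigma(r),r) \underset{tsr}{=} \rho$. For associativity I would invoke the $tt$ rule to obtain $r \circ (s \circ t) = \tau(\tau(t,s),r) \underset{tt}{=} \tau(t,\tau(s,r)) = (r \circ s) \circ t$. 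These are precisely the reductions already carried out for $S^{1}$, and they go through unchanged because each depends only on the $LND_{EQ}$-$TRS$ rules governing $\tau$, $\sigma$ and $\rho$.

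Since we work throughout with equivalence classes $[loop]_{rw}$, every one of these $rw$-equalities holds strictly at the level of classes, so the operation descends to $\Pi_{1}(\mathbb{T}^{2},x_{0})$ and the axioms hold on the nose. The one point that deserves genuine care — and which I expect to be the main obstacle — is the well-definedness of $\circ$ on $rw$-classes: I would need to confirm that $\tau$ respects $rw$-equality, i.e.\ that $r =_{rw} r'$ and $s =_{rw} s'$ imply $\tau(s,r) =_{rw} \tau(s',r')$. This follows from the fact, recorded in \emph{Proposition \ref{proposition3.7}}, that $rw$-equality is the transitive, symmetric and reflexive closure of $rw$ and is a congruence with respect to the path constructors, so the choice of representative is immaterial.

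Finally, I would stress that the extra generator $\beta$ and the commutation path $co$ specific to the torus play no role in establishing the group axioms themselves; they enter only afterwards, when the group is identified up to isomorphism. There the lemma just proved reduces every element to the normal form $\beta^{n}\alpha^{m}$, and $co$ forces the resulting group to be abelian, pointing towards the isomorphism $\Pi_{1}(\mathbb{T}^{2},x_{0}) \simeq \mathbb{Z} \times \mathbb{Z}$. For the present proposition, however, the bare verification of closure, identity, inverse and associativity via the reductions above suffices.
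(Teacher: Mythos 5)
Your proof is correct, but it follows a genuinely different route from the paper's. The paper does \emph{not} simply replay the generic circle argument: it works throughout with the normal-form representatives $\beta^{n}\alpha^{m}$ supplied by the preceding lemma, and it uses the torus-specific commutation path $co$ in an essential way. Its closure step shows $(\beta^{r}\alpha^{s})\circ(\beta^{u}\alpha^{v}) \underset{co}{=} \beta^{r}\beta^{u}\alpha^{s}\alpha^{v} = \beta^{n}\alpha^{m}$, i.e.\ that composition of normal forms is again a normal form with exponents adding; its inverse step exhibits $\sigma(\beta^{n})\sigma(\alpha^{m})$ as the inverse of $\beta^{n}\alpha^{m}$, which again needs $co$ to slide $\beta^{n}$ past $\sigma(\alpha^{m})$ before the $tsr$/$tr$ rules can cancel. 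Your observation that the bare group axioms need none of this — that closure, identity, inverse and associativity for loops at $x_{0}$ follow from the generic rules $tr$, $tsr$, $tlr$, $trr$, $tt$ exactly as for $S^{1}$, so that $\Pi_{1}(A,a)$ is a group for \emph{any} type $A$ and base point $a$ — is sound, and arguably cleaner. What the paper's heavier computation buys is the additive behaviour of exponents under $\circ$, which is precisely what makes the later bijection with $\mathbb{Z}\times\mathbb{Z}$ a group isomorphism rather than a mere bijection; in your arrangement that computation is not eliminated, only deferred to the isomorphism theorem.

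One small caution: you justify well-definedness of $\circ$ on $rw$-classes by appeal to Proposition \ref{proposition3.7}, but that proposition only records that $rw$-equality is reflexive, symmetric and transitive; it says nothing about $\tau$ and $\sigma$ being congruences with respect to $rw$-equality. That congruence property is true (rewrite rules of the $LND_{EQ}$-$TRS$ apply inside contexts), and the paper itself silently assumes it, but as stated your citation does not prove it — you would need to argue it from the closure of $rw$-contraction under the path constructors.
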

\smallskip
\begin{proof}

\item [(+): Sum]  $$$$

\begin{prooftree}
\AxiomC{ $x_{0} \underset{\beta^{u}\alpha^{v}}{=}x_{0}$}
\AxiomC{$x_{0} \underset{\beta^{r}\alpha^{s}}{=} x_{0}$}
\BinaryInfC{ $x_{0} \underset{\tau \left(\beta^{u}\alpha^{v},\beta^{r}\alpha^{s}\right)}{=}x_{0}$}.
\end{prooftree}

But, 

\begin{eqnarray*}
\tau(\beta^{u}\alpha^{v},\beta^{u}\alpha^{v}) &=& (\beta^{r}\alpha^{s}) \circ (\beta^{u}\alpha^{v}) \\
 &=& \beta^{r}\alpha^{s} \beta^{u}\alpha^{v}\\
 &\underset{co}{=}& \beta^{r}\beta^{u}\alpha^{s} \alpha^{v}\\
 &=& \beta^{n}\alpha^{m} \in \Pi_{1}\left(T,x_{0}\right)
\end{eqnarray*}

\item[($\sigma$): Inverse]$$$$ 

\begin{prooftree}
\AxiomC{ $x_{0} \underset{\beta^{n}\alpha^{m}}{=}x_{0}$}
\AxiomC{$x_{0} \underset{\sigma{(\beta^{n})}\sigma{(\alpha^{m})}}{=} x_{0}$}
\BinaryInfC{ $x_{0} \underset{\tau \left(\beta^{n}\alpha^{m},\sigma{(\beta^{n})}\sigma{(\alpha^{m})}\right)}{=}x_{0}$}.
\end{prooftree}

But, 

\begin{eqnarray*}
\tau(\beta^{n}\alpha^{m},\sigma{(\beta^{n})}\sigma{(\alpha^{m})}) &=& (\sigma{(\beta^{n})}\sigma{(\alpha^{m})}) \circ (\beta^{n}\alpha^{m}) \\
 &=& \sigma{(\beta^{n})}\sigma{(\alpha^{m})} \beta^{n}\alpha^{m}\\
 &\underset{co}{=}&  \sigma{(\beta^{n})}\beta^{n}\sigma{(\alpha^{m})}\alpha^{m}\\
 &\underset{tsr}{=}&\rho_{\beta}\rho_{\alpha}\underset{trr}{=}\rho_{x_{0}}.
\end{eqnarray*}

On the other hand, we have:\\

\begin{prooftree}
\AxiomC{ $x_{0} \underset{\sigma{(\beta^{n})}\sigma{(\alpha^{m})}}{=}x_{0}$}
\AxiomC{$x_{0} \underset{\beta^{n}\alpha^{m}}{=} x_{0}$}
\BinaryInfC{ $x_{0} \underset{\tau \left(\sigma{(\beta^{n})}\sigma{(\alpha^{m}),\beta^{n}\alpha^{m}}\right)}{=}x_{0}$}.
\end{prooftree}

But, 

\begin{eqnarray*}
\tau(\sigma{(\beta^{n})}\sigma{(\alpha^{m})},\beta^{n}\alpha^{m}) &=& (\beta^{n}\alpha^{m}) \circ (\sigma{(\beta^{n})}\sigma{(\alpha^{m})})\\
 &=& \beta^{n}\alpha^{m}\sigma{(\beta^{n})}\sigma{(\alpha^{m})}\\
 &\underset{co}{=}& \beta^{n}\sigma{(\beta^{n})}\alpha^{m}\sigma{(\alpha^{m})}\\
 &\underset{tr}{=}&\rho_{\beta}\rho_{\alpha}\underset{trr}{=}\rho_{x_{0}}.
\end{eqnarray*}


\item[($\epsilon$): Identity]$$$$ 

\begin{prooftree}
\AxiomC{ $x_{0} \underset{\beta^{n}\alpha^{m}}{=}x_{0}$}
\AxiomC{$x_{0} \underset{\rho_{x_{0}}}{=} x_{0}$}
\BinaryInfC{ $x_{0} \underset{\tau \left(\beta^{n}\alpha^{m},\rho_{x_{0}}\right)}{=}x_{0}$}
\end{prooftree}

But, 

\begin{eqnarray*}
\tau(\beta^{n}\alpha^{m},\rho_{x_{0}}) &=& (\rho_{x_{0}}) \circ (\beta^{n}\alpha^{m})  \\
 &=& \rho_{x_{0}}\beta^{n}\alpha^{m}  \\
 &\underset{tlr}{=}&  \beta^{n}\alpha^{m}
\end{eqnarray*}
 
 and so \[
 \tau(\beta^{n}\alpha^{m},\rho_{x_{0}}) \underset{trr}{=} \beta^{n}\alpha^{m}.
\]

On the other hand, we have: $$$$

\begin{prooftree}
\AxiomC{ $x_{0} \underset{\rho_{x_{0}}}{=}x_{0}$}
\AxiomC{$x_{0} \underset{\beta^{n}\alpha^{m}}{=} x_{0}$}
\BinaryInfC{ $x_{0} \underset{\tau \left(\rho_{x_{0},\beta^{n}\alpha^{m}}\right)}{=}x_{0}$}.
\end{prooftree}

\bigskip
But, 

\begin{eqnarray*}
\tau \left(\rho_{x_{0}},\beta^{n}\alpha^{m}\right) &=&  (\beta^{n}\alpha^{m}) \circ (\rho_{x_{0}})  \\
 &=& \beta^{n}\alpha^{m} \rho_{x_{0}}  \\
 &\underset{trr}{=}&  \beta^{n}\alpha^{m}
\end{eqnarray*}

and so \[
 \tau(\rho_{x_{0}},\beta^{n}\alpha^{m}) \underset{trr}{=} \beta^{n}\alpha^{m}.
\]

\item[( $\circ$ ): Associativity]$$$$

\begin{prooftree}
\AxiomC{$x_{0} \underset{\beta^{n}\alpha^{m}}{=}x_{0}$}
\AxiomC{$x_{0} \underset{\beta^{i}\alpha^{j}}{=} x_{0}$}
\BinaryInfC{$ x_{0} \underset{\tau \left(\beta^{n}\alpha^{m},\beta^{i}\alpha^{j}\right)}{=}x_{0}$}
\AxiomC{$x_{0} \underset{\beta^{r}\alpha^{s}}{=}x_{0}$}
\BinaryInfC{$x_{0} \underset{\tau \left(\tau \left(\beta^{n}\alpha^{m},\beta^{i}\alpha^{j}\right),\beta^{r}\alpha^{s} \right)}{=}x_{0}$}.
\end{prooftree}

But, 

\begin{eqnarray*}
\tau \left(\tau \left(\beta^{n}\alpha^{m},\beta^{i}\alpha^{j}\right),\beta^{r}\alpha^{s} \right)  
 &=& (\beta^{r}\alpha^{s}) \circ \tau (\beta^{n}\alpha^{m},\beta^{i}\alpha^{j}) \\
 &=& (\beta^{r}\alpha^{s}) \circ (\beta^{i}\alpha^{j}\circ \beta^{n}\alpha^{m})\\
 &=& (\beta^{r}\alpha^{s}) \circ (\beta^{i}\alpha^{j}\beta^{n}\alpha^{m})\\
 &=& \beta^{r}\alpha^{s} \beta^{i}\alpha^{j}\beta^{n}\alpha^{m}.\\
\end{eqnarray*}
\bigskip

On the other hand, we have:

\begin{prooftree}
\AxiomC{$x_{0} \underset{\beta^{n}\alpha^{m}}{=}x_{0}$}
\AxiomC{$x_{0} \underset{\beta^{i}\alpha^{j}}{=} x_{0}$}
\AxiomC{$x_{0} \underset{\beta^{r}\alpha^{s}}{=} x_{0}$}
\BinaryInfC{$ x_{0} \underset{\tau \left(\beta^{i}\alpha^{j},\beta^{r}\alpha^{s}\right)}{=}x_{0}$}
\BinaryInfC{$x_{0} \underset{\tau (\beta^{n}\alpha^{m}, \tau \left(\beta^{i}\alpha^{j},\beta^{r}\alpha^{s})\right)}{=}x_{0}$}.
\end{prooftree}

But, 

\begin{eqnarray*}
\tau \left(\beta^{n}\alpha^{m} ,\tau \left(\beta^{i}\alpha^{j},\beta^{r}\alpha^{s}\right)\right)  
 &=&  \tau (\beta^{i}\alpha^{j},\beta^{r}\alpha^{s})\circ(\beta^{n}\alpha^{m}) \\
 &=& (\beta^{r}\alpha^{s} \circ \beta^{i}\alpha^{j})\circ (\beta^{n}\alpha^{m}) \\
 &=& (\beta^{r}\alpha^{s}\beta^{i}\alpha^{j})\circ (\beta^{n}\alpha^{m}) \\
 &=& \beta^{r}\alpha^{s} \beta^{i}\alpha^{j}\beta^{n}\alpha^{m}.\\
\end{eqnarray*}

Therefore, it follows that $ \left( \Pi_{1}(\mathbb{T}^{2},x_{0}),\circ \right)$ is a group.

\end{proof}

\begin{theorem}
$\Pi_{1}\left(\mathbb{T}^{2},x_{0}\right) \simeq \mathbb{Z} \times \mathbb{Z}.$
\end{theorem}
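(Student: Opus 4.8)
The plan is to establish a group isomorphism $\Pi_{1}(\mathbb{T}^{2},x_{0}) \simeq \mathbb{Z} \times \mathbb{Z}$ by mimicking the strategy already used for $\mathbb{S}^{1}$, but now tracking a \emph{pair} of winding numbers rather than a single one. The two preceding results carry almost all the weight: the lemma shows that every path is $rw$-equal to a normal form $\beta^{n}\alpha^{m}$ with $m,n \in \mathbb{Z}$, and the preceding proposition shows that $(\Pi_{1}(\mathbb{T}^{2},x_{0}),\circ)$ is a group. So the task reduces to exhibiting a bijective group homomorphism.

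First I would define $toInteger \colon \Pi_{1}(\mathbb{T}^{2},x_{0}) \rightarrow \mathbb{Z}\times\mathbb{Z}$ by sending the representative $\beta^{n}\alpha^{m}$ to the pair $(n,m)$, using the normal-form lemma to guarantee that every class has such a representative. In the other direction I would define $toPath \colon \mathbb{Z}\times\mathbb{Z} \rightarrow \Pi_{1}(\mathbb{T}^{2},x_{0})$ by $toPath(n,m) = \beta^{n}\alpha^{m}$, which can itself be built by iterated composition with $\beta$, $\sigma(\beta)$, $\alpha$, $\sigma(\alpha)$ exactly as in the circle case. I would then check that these two maps are mutually inverse, which is the same ``straightforward check'' invoked in the theorem for the circle: composing one after the other returns the original pair (respectively the original normal form), because the normal form is unique up to $rw$-equality.

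The genuinely new point, and the step I expect to be the main obstacle, is verifying that $toInteger$ is a group \emph{homomorphism} and in particular that the group is \emph{abelian}, so that the codomain is really $\mathbb{Z}\times\mathbb{Z}$ with componentwise addition. Here the commutation path $co$ establishing $\beta\alpha =_{co} \alpha\beta$ is essential: for the product $(\beta^{r}\alpha^{s})\circ(\beta^{u}\alpha^{v}) = \beta^{r}\alpha^{s}\beta^{u}\alpha^{v}$ I would use $co$ repeatedly to slide every $\alpha$ past every $\beta$, reducing it $rw$-equally to $\beta^{r+u}\alpha^{s+v}$; this is exactly the computation already displayed in the ``Sum'' case of the group proposition. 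Consequently $toInteger\big((\beta^{r}\alpha^{s})\circ(\beta^{u}\alpha^{v})\big) = (r+u,\,s+v) = toInteger(\beta^{r}\alpha^{s}) + toInteger(\beta^{u}\alpha^{v})$, so the operation $\circ$ translates to componentwise addition in $\mathbb{Z}\times\mathbb{Z}$.

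Finally I would record that $toInteger$ sends $\rho_{x_{0}} = \beta^{0}\alpha^{0}$ to $(0,0)$ and sends inverses to negatives (again via the inverse computation in the group proposition), so it is a homomorphism of groups; being also a bijection by the inverse map $toPath$, it is an isomorphism. I would keep the argument short, since every redundancy elimination needed has already been justified by the $LND_{EQ}$-$TRS$ rules $co$, $tt$, $tsr$, $tr$, $trr$ and $tlr$ used in the earlier proofs, and conclude $\Pi_{1}(\mathbb{T}^{2},x_{0}) \simeq \mathbb{Z}\times\mathbb{Z}$.
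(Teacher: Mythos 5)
Your proposal follows essentially the same route as the paper: invoke the normal-form lemma to write every loop as $\beta^{n}\alpha^{m}$, define the two maps $toPath^{2}\colon \mathbb{Z}\times\mathbb{Z}\to\Pi_{1}(\mathbb{T}^{2},x_{0})$ and $toInteger^{2}\colon \Pi_{1}(\mathbb{T}^{2},x_{0})\to\mathbb{Z}\times\mathbb{Z}$ by iterated composition with the generators, and check they are mutually inverse bijections. If anything, you are slightly more careful than the paper, which stops at exhibiting the bijection and never explicitly verifies the homomorphism property; your use of $co$ to reduce $(\beta^{r}\alpha^{s})\circ(\beta^{u}\alpha^{v})$ to $\beta^{r+u}\alpha^{s+v}$ makes explicit what the paper leaves implicit in the ``Sum'' case of its group proposition.
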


\begin{proof}

To prove this we need to find a bijection between spaces.

Consider the map:

\begin{eqnarray*}
toPath^{2}: \mathbb{Z} \times \mathbb{Z} &\longrightarrow&  \Pi_{1}\left (\mathbb{T}^{2},x_{0}\right)\\
(n,m)  &\longrightarrow&  toPath^{2}(n,m).
\end{eqnarray*}

Defined by:

$toPath^{2}(n,m) = \begin{cases}

toPath^{2}(0,0)=\rho &  \\
toPath^{2}(n,0)= toPath^{2}(n-1,0)\circ loop_{v}^{1}    \hspace{0.4cm}& n>0 \\
toPath^{2}(n,0)= toPath^{2}(n+1,0)\circ \sigma(loop_{v}^{1})    \hspace{0.4cm}& n<0 \\
toPath^{2}(n,m)= toPath^{2}(n,m-1)\circ (loop_{h}^{1})    \hspace{0.4cm}& m>0 \\
toPath^{2}(n,m)= toPath^{2}(n,m+1)\circ \sigma(loop_{h}^{1})    \hspace{0.4cm}& m<0 \\
\end{cases}$ \\ \\

Now, consider:

\begin{eqnarray*}
toInteger^{2}:  \Pi_{1}\left (\mathbb{T}^{2},x_{0}\right)  &\longrightarrow& \mathbb{Z} \times \mathbb{Z} \\
loop_{v}^{n}loop_{h}^{m}  &\longrightarrow&  (n,m).
\end{eqnarray*}

Defined by: 

$toInteger^{2}(n,m) = \begin{cases}

toInteger^{2}(\rho)=(0,0)&\\
toInteger^{2}(loop_{v}^{n})= toInteger^{2}(loop_{v}^{n-1})+(1,0) \hspace{0.4cm}& n>0\\
toInteger^{2}(loop_{v}^{n})= toInteger^{2}(loop_{v}^{n+1})+(-1,0)    \hspace{0.4cm}& n<0 \\
toInteger^{2}(loop_{v}^{n}loop_{h}^{m})= toInteger^{2}(loop_{v}^{n}loop_{h}^{m-1})+(0,1)    \hspace{0.4cm}& m>0 \\
toInteger^{2}(loop_{v}^{n}loop_{h}^{m})= toInteger^{2}(loop_{v}^{n}loop_{h}^{m+1})+(0,-1) \hspace{0.4cm}& m<0 \\
\end{cases}$\\

Therefore, we defined two injective maps: $$toInteger^{2}:  \Pi_{1}\left (\mathbb{T}^{2},x_{0}\right)  \longrightarrow \mathbb{Z} \times \mathbb{Z}$$ and $$toPath^{2}: \mathbb{Z} \times \mathbb{Z} \longrightarrow \Pi_{1}\left (\mathbb{T}^{2},x_{0}\right)$$ 

That way, we have that $\Pi_{1}\left(\mathbb{T}^{2},x_{0}\right)$ is isomorphic to $ \mathbb{Z} \times \mathbb{Z}$
\end{proof}

	
\subsection{Fundamental Group of the Real Projective Plane - $\Pi_{1}(\mathbb{P}^{2}, x_{0})$}

The real projective plane, denoted by $\mathbb{P}^{2}$, is by definition the set of all straight lines that pass through the origin of space $\mathbb{R}^3$. Each of these lines is a point in the projective plane. In $\mathbb{P}^{2}$, the points $(x,y,z)$ and $(\tilde{x},\tilde{y},\tilde{z}) \in \mathbb{R}^3-\{(0,0,0)\}$ are equivalents if, and only if, they are on the same line, that is, there is a constant of proportionality between them.

We can visualize $\mathbb{P}^2$ taking the sphere $\mathbb{S}^2$ of radius 1. at points in the sphere where $z\neq 0$ we can denote by $[x, y, z]$ the equivalents points $(x, y, z)$ and $(\tilde{x}, \tilde{y}, \tilde{z})$. Therefore, we can represent the projective plane as the upper hemisphere of the sphere together with the set of all pairs of antipodal point located on the curve $z =0$ in the sphere.

\begin{figure}[!htb]
\centering
\includegraphics[width=0.3\columnwidth]{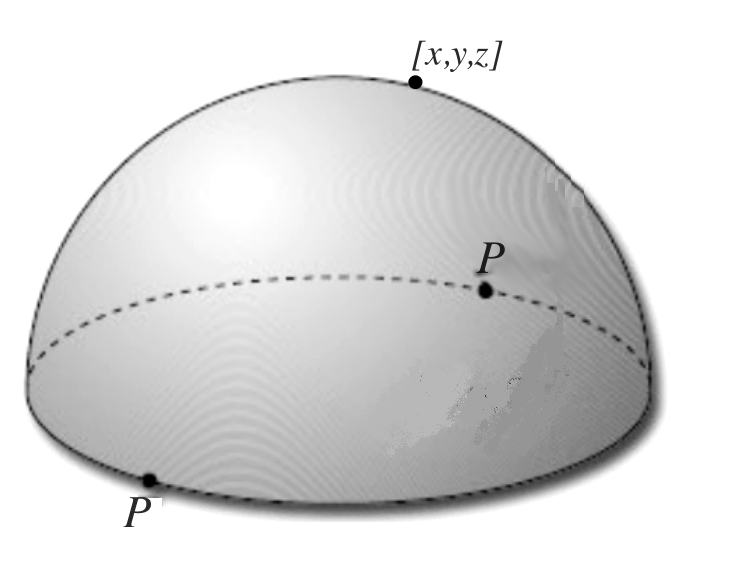}
\caption{Representation of $\mathbb{P}^2$ } 
\label{Rotulo7}
\end{figure}

Let's then map it on the unit disk through the following map $[x,y,z] \longrightarrow (x,y,0)$, as follows in the \textbf{figure 8}:

\begin{figure}[!htb]
\centering
\includegraphics[width=0.5 \columnwidth]{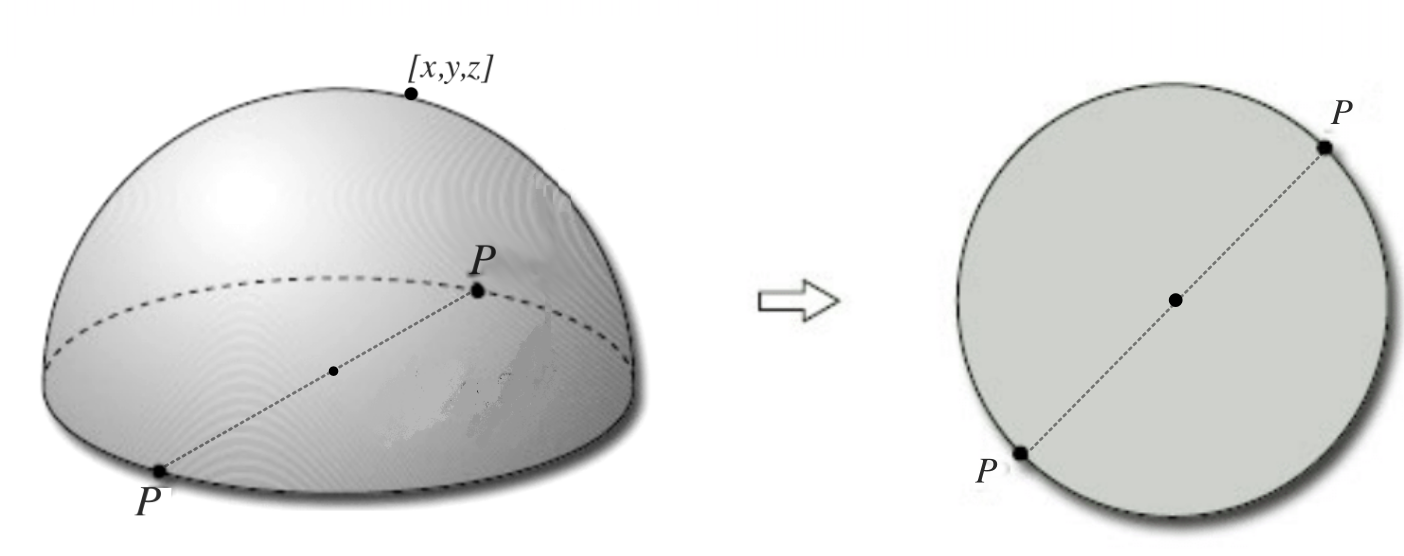}
\caption{Mapping Projection in the unit disk on $xy$ plane.} 
\label{Rotulo8}
\end{figure}

We denote by $\alpha$ any \textit{loop} that connects the identified antipodal points, so we can consider $\alpha$ as a \textit {loop}  (as follows in the \textbf{figure 9}) and any other \textit {loop} that connects the identified antipodal points is homotopic to $\alpha$.

\begin{figure}[!htb]
\centering
\includegraphics[width=0.6\columnwidth]{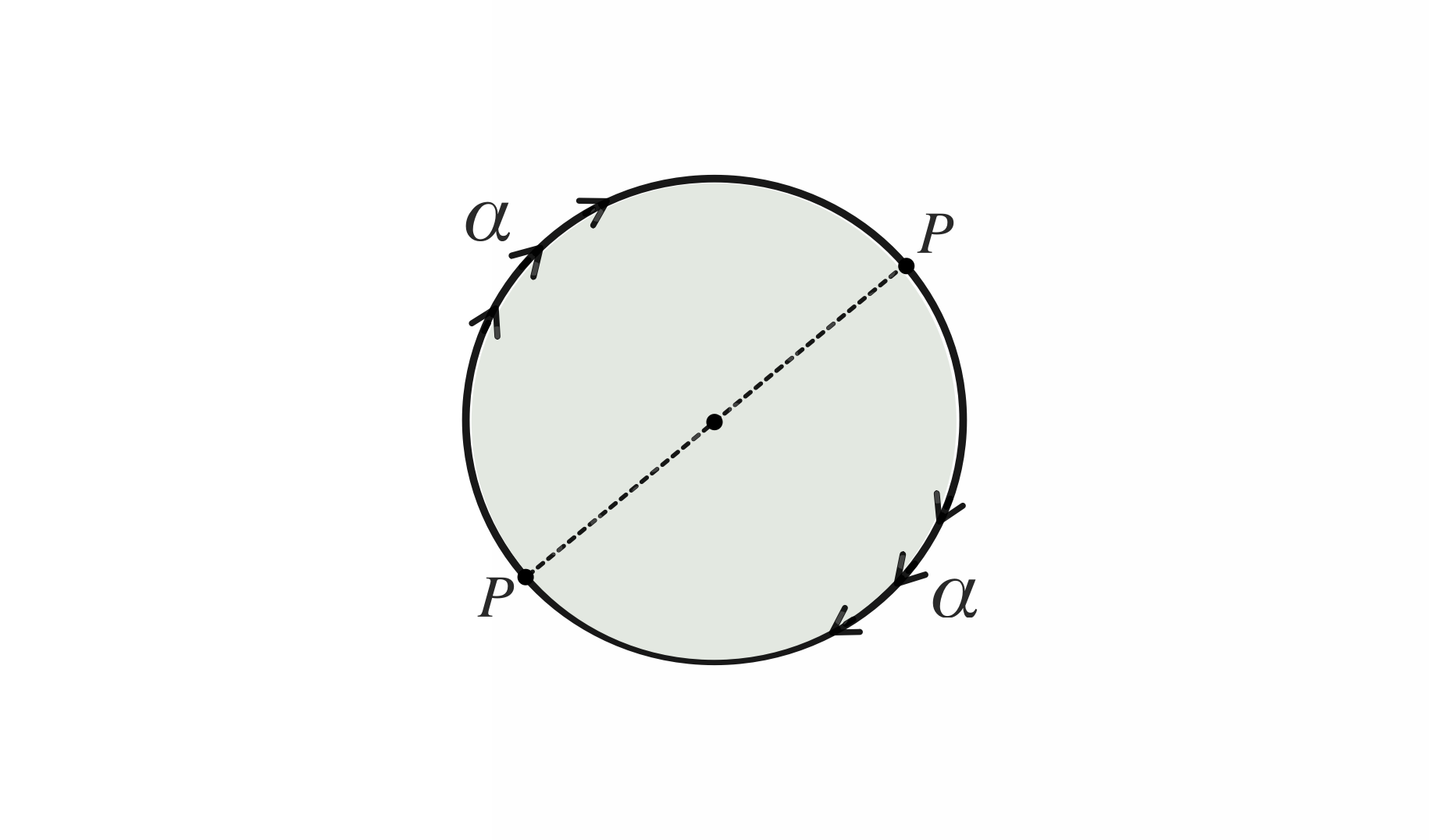}
\caption{\textit{loop} $\alpha$. } 
\label{Rotulo9}
\end{figure}

Since we can represent the real projective plane $\mathbb{P}^2$ for a disk $D_1$, we can define $\mathbb{P}^2$ it homotopically as follows:
\newline

\begin{definition}
 The real projective plane $\mathbb{P}^2$ is defined (inductively) by:
 
 \item[(i)] The points $Q:\mathbb{P}^2$, such that $Q\in Int D_1$, where $IntD_1$ is the set of all points in the interior of the disk.
 \item[(ii)] The pairs of antipodal points $P,P':\mathbb{P}^2$, such that $P,P'\in \partial D_1$, where $\partial D_1$ is the set of pairs of antipodal points.
 \item[(iii)]A path $\alpha:P=P'$.
 \item[(iv)] A path $cicl$ that establishes $\alpha \circ \alpha \underset{cicl}{=} \rho$, i.e,  $cicl:Id_{\mathbb{P}^2}(\alpha\circ \alpha,\rho)$.
 
\end{definition}

\begin{lemma}
 
All paths in $\mathbb{P}^{2}$ generated by $\rho$ or $\alpha$ are \textit{rw-equal} to $\rho$ or $\alpha$.
\end{lemma}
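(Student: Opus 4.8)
The plan is to adapt the inductive argument already used for the circle, the decisive new ingredient being the defining higher path $cicl$, which forces $\alpha \circ \alpha \underset{cicl}{=} \rho$. First I would extract the consequences of this relation for the generator $\alpha$. Recalling that composition is $r \circ s = \tau(s,r)$ and that the inverse rule of the $LND_{EQ}$-$TRS$ gives $\alpha \circ \sigma(\alpha) = \tau(\sigma(\alpha),\alpha) \underset{tsr}{=} \rho$, the key computation is that $\alpha$ is its own inverse \emph{up to} $rw$-equality:
\begin{equation*}
\alpha \underset{tlr}{=} \alpha \circ \rho \underset{tsr}{=} \alpha \circ (\alpha \circ \sigma(\alpha)) \underset{tt}{=} (\alpha \circ \alpha) \circ \sigma(\alpha) \underset{cicl}{=} \rho \circ \sigma(\alpha) \underset{trr}{=} \sigma(\alpha).
\end{equation*}
Thus $\sigma(\alpha) =_{rw} \alpha$, which is exactly where the behaviour of $\mathbb{P}^{2}$ departs from that of the circle.

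With this identity secured, the lemma follows by induction on the construction of paths from $\rho$ and $\alpha$ under composition, mirroring the circle and torus lemmas. The base case is $\rho$ itself, which is trivially $rw$-equal to $\rho$. For the inductive step I take a path $P$ already known to be $rw$-equal to $\rho$ or to $\alpha$ and compose it with a generator; composition with $\sigma(\alpha)$ requires no separate treatment, since $\sigma(\alpha) =_{rw} \alpha$ by the computation above, and $\sigma(\rho) =_{rw} \rho$ contributes nothing new. If $P =_{rw} \rho$, then $P \circ \alpha =_{rw} \rho \circ \alpha = \tau(\alpha,\rho) \underset{trr}{=} \alpha$; if $P =_{rw} \alpha$, then $P \circ \alpha =_{rw} \alpha \circ \alpha \underset{cicl}{=} \rho$. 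Hence the two-element set $\{\rho, \alpha\}$ is closed under the generating operations, and every path built from $\rho$ and $\alpha$ is $rw$-equal to one of them.

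The hard part is not the induction, which is entirely mechanical once the closure property is in place, but rather the reduction $\sigma(\alpha) =_{rw} \alpha$. Unlike the circle, where $\sigma(loop) = loop^{-1}$ is a genuinely distinct path and the quotient is all of $\mathbb{Z}$, here the higher path $cicl$ collapses $\alpha$ onto its own inverse, and one must route this relation through associativity ($tt$) together with the standard inverse and reflexivity rules to witness the collapse inside the term rewriting system. Establishing that single identity rigorously is the crux; everything else is routine case analysis, and the result already anticipates the expected conclusion that $\Pi_{1}(\mathbb{P}^{2},x_{0})$ is the two-element group generated by $\alpha$ subject to $\alpha \circ \alpha =_{rw} \rho$.
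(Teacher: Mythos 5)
Your proposal is correct, and its inductive core coincides with the paper's proof: both arguments show that the set $\{\rho,\alpha\}$ is closed under composition, using $trr$/$tlr$ to absorb the unit $\rho$ and the higher path $cicl$ to collapse $\alpha\circ\alpha$ to $\rho$. The genuine difference lies in the treatment of inverses. The paper's induction only ever composes with the generators $\rho$ and $\alpha$; it never considers composition with $\sigma(\alpha)$, and only \emph{after} the proof does it remark that $cicl$ yields $\alpha=\sigma(\alpha)$, asserted rather than derived inside the rewriting system. You instead establish $\sigma(\alpha)=_{rw}\alpha$ up front, via the chain $\alpha \underset{tlr}{=} \alpha\circ\rho \underset{tsr}{=} \alpha\circ(\alpha\circ\sigma(\alpha)) \underset{tt}{=} (\alpha\circ\alpha)\circ\sigma(\alpha) \underset{cicl}{=} \rho\circ\sigma(\alpha) \underset{trr}{=} \sigma(\alpha)$, each step of which is licensed by the listed $LND_{EQ}$-$TRS$ rules, and then use this identity to dispose of compositions with $\sigma(\alpha)$ in the induction. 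This buys completeness: since paths are generated by $\tau$ together with $\sigma$ (exactly as in the circle, where $\sigma(loop)$ produces the negative powers $loop^{-n}$), an induction that omits composition with $\sigma(\alpha)$ leaves a case unhandled, and your preliminary identity closes it while exhibiting the collapse of $\alpha$ onto its own inverse as a computation wholly inside the term rewriting system. What the paper's route buys is brevity. Incidentally, your version also sidesteps the paper's slip in its base case for $\alpha$, where $\alpha\circ\alpha=\tau(\alpha,\alpha)\underset{cicl}{=}\alpha$ is written with $\alpha$ instead of $\rho$ on the right-hand side.
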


\begin{proof}

Consider the following base cases, $\rho$ and $\alpha$:

\item [Base case  $\rho$:]

\item[(i)] $\rho \circ \rho = \tau(\rho,\rho) \underset{trr}{=} \rho$.
\item[(ii)] $\rho \circ \alpha  = \tau(\alpha,\rho) \underset{trr}{=} \alpha$.
\item[(iii)] $\alpha \circ \rho = \tau(\rho,\alpha)  \underset{tlr}{=} \alpha$.

\item [Base case $\alpha$:]

\item[(i)] $\rho \circ \alpha = \tau(\alpha,\rho) \underset{trr}{=} \alpha$.
\item[(ii)] $\alpha \circ \rho  = \tau(\rho,\alpha) \underset{tlr}{=} \alpha$.
\item[(iii)] $\alpha \circ \alpha = \tau(\alpha,\alpha)  \underset{cicl}{=} \alpha$.

\item [Inductive case: Assuming true for $n$, we have:]

\item [If  \textit{loop$^{n}=\rho$}, we have two possibilities for $n+1$:]
\item[(i)] $loop^{n+1}=loop^{n}\circ\alpha = \rho \circ \rho =\tau(\rho,\rho)  \underset{trr}{=} \rho$.
\item[(ii)] $loop^{n+1}=loop^{n}\circ \alpha = \rho \circ \alpha =\tau(\alpha,\rho)  \underset{trr}{=} \alpha$.

\item [If  \textit{loop$^{n}=\alpha$}, we have two possibilities for $n+1$:]
\item[(i)] $loop^{n+1}=loop^{n}\circ\alpha = \alpha \circ \rho =\tau(\rho,\alpha)  \underset{tlr}{=} \alpha$.
\item[(ii)] $loop^{n+1}=loop^{n}\circ \alpha = \alpha \circ \alpha =\tau(\alpha,\alpha)  \underset{cicl}{=} \rho$.

\end{proof}

Thus, all paths in $\mathbb{P}^{2}$ generated by $\rho$ or $\alpha$ are \textit{rw-equal} to either $\alpha$ or $\rho$. Since we have $ \alpha \circ \alpha =\tau(\alpha,\alpha)  \underset{cicl}{=} \rho$, the term $cicl$ give us one important result: $\alpha=\sigma(\alpha)$.

\begin{proposition}
$\left( \Pi_{1}(\mathbb{P}^{2}),\circ \right)$ is a group.
\end{proposition}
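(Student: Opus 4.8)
The plan is to follow the same template already used for $(\Pi_1(S,a), \circ)$, verifying the four group axioms for the operation $r \circ s := \tau(s,r)$, but exploiting the fact that, by the preceding lemma, the carrier is tiny: every path generated by $\rho$ or $\alpha$ is $rw$-equal to either $\rho$ or $\alpha$, so $\Pi_1(\mathbb{P}^2, x_0)$ has exactly the two equivalence classes $[\rho]_{rw}$ and $[\alpha]_{rw}$. This turns every axiom into a finite case check rather than an inductive argument.

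First I would record the operation $r \circ s = \tau(s,r)$ and verify \textbf{closure}. Because the underlying set is $\{[\rho]_{rw}, [\alpha]_{rw}\}$, closure reduces to the four products $\rho \circ \rho$, $\rho \circ \alpha$, $\alpha \circ \rho$ and $\alpha \circ \alpha$. The first three are exactly the base cases already computed in the lemma, reducing via $trr$, $trr$ and $tlr$ respectively; the decisive fourth one uses the defining path $cicl$, namely $\alpha \circ \alpha = \tau(\alpha,\alpha) \underset{cicl}{=} \rho$. Hence each product is again one of the two classes.

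Next I would treat \textbf{identity} and \textbf{inverse}. The path $\rho$ is the identity, with $r \circ \rho = \tau(\rho,r) \underset{tlr}{=} r$ and $\rho \circ r = \tau(r,\rho) \underset{trr}{=} r$, exactly as for the circle. For inverses, the key fact already extracted from $cicl$ is $\alpha = \sigma(\alpha)$: this makes $\alpha$ its own inverse, since $\alpha \circ \alpha = \tau(\alpha,\alpha) \underset{cicl}{=} \rho$, while $\rho$ is trivially self-inverse. Finally, \textbf{associativity} is inherited from the associativity of $\tau$ itself: for any three representatives one has $r \circ (s \circ t) = \tau(\tau(t,s), r) \underset{tt}{=} \tau(t, \tau(s,r)) = (r \circ s) \circ t$, which is identical to the circle computation and independent of the particular surface.

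The verification is largely routine once the lemma pins down the two-element carrier, so I do not expect a genuine obstacle; the only surface-specific input is the path $cicl$. The step that most deserves care is confirming that the operation descends to $rw$-equivalence classes, i.e.\ that $r \circ s$ depends only on $[r]_{rw}$ and $[s]_{rw}$, so that the case check really establishes closure of the quotient. Once $\alpha \circ \alpha \underset{cicl}{=} \rho$ and $\alpha = \sigma(\alpha)$ are in hand, the remaining axioms are immediate, and the group is revealed to be the two-element group, anticipating the expected isomorphism $\Pi_1(\mathbb{P}^2, x_0) \simeq \mathbb{Z}/2\mathbb{Z}$.
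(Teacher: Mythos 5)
Your proposal is correct and follows the paper's overall template: both proofs verify the group axioms by case-checking on the two-class carrier $\{[\rho]_{rw},[\alpha]_{rw}\}$ supplied by the preceding lemma, with $cicl$ doing the surface-specific work. The differences are in which rewrite rules carry two of the axioms, and they are worth noting because they run in opposite directions. For \textbf{inverses}, the paper does not invoke $cicl$ at all: it takes $\sigma(\alpha)$ as the inverse and closes both composites generically, $\tau(\alpha,\sigma(\alpha)) \underset{tr}{=} \rho$ and $\tau(\sigma(\alpha),\alpha) \underset{tsr}{=} \rho$, whereas you use the $\mathbb{P}^2$-specific fact $\alpha = \sigma(\alpha)$ (extracted from $cicl$) to make $\alpha$ self-inverse. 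For \textbf{associativity}, the roles reverse: you appeal to the general rule $\tau(\tau(t,r),s) \underset{tt}{=} \tau(t,\tau(r,s))$, exactly as in the paper's circle proof, which settles all triples at once and is independent of the surface; the paper instead computes the single nontrivial instance, reducing both $\tau(\tau(\alpha,\alpha),\alpha)$ and $\tau(\alpha,\tau(\alpha,\alpha))$ to $\alpha$ via $cicl$, $trr$ and $tlr$. Your route is arguably cleaner on associativity (the paper's concrete check covers only the triple $(\alpha,\alpha,\alpha)$, though the others are trivial once $\rho$ is the identity), while the paper's inverse argument is more uniform since it would work verbatim on any surface. Finally, your flagged concern --- that $\circ$ must descend to $rw$-equivalence classes for the finite case check to establish closure of the quotient --- is a genuine point of care that the paper passes over in silence; it follows from the fact that $\tau$ is compatible with $rw$-equality, but neither you nor the paper spells this out.
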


\begin{proof}

\item [(+): Sum]$$$$

\begin{prooftree}
\AxiomC{ $P \underset{\alpha}{=}P$}
\AxiomC{$P \underset{\alpha}{=}P$}
\BinaryInfC { $P \underset{\tau \left(\alpha,\alpha\right)}{=}P$}
\end{prooftree}

But,  $$\alpha \circ \alpha = \tau \left(\alpha,\alpha\right) \underset{cicl}{=} \rho \in \Pi_{1}\left(\mathbb{P}^2\right).$$


\item[($\sigma$): Inverse]$$$$

\begin{prooftree}
\AxiomC{ $P \underset{\alpha}{=} P$}
\AxiomC{$P \underset{\sigma{(\alpha)}}{=} P$}
\BinaryInfC{ $P\underset{\tau \left(\alpha,\sigma{(\alpha)}\right)}{=}P$}
\end{prooftree}

But,  

$$\sigma(\alpha) \circ \alpha= \tau \left(\alpha,\sigma(\alpha)\right) \underset{tr}{=} \rho \in \Pi_{1}\left(\mathbb{P}^2\right).$$

On the other hand, we have:

\begin{prooftree}
\AxiomC{$P \underset{\sigma{(\alpha)}}{=} P$}
\AxiomC{ $P \underset{\alpha}{=} P$}
\BinaryInfC{ $P\underset{\tau \left(\sigma{(\alpha)}, \alpha \right)}{=}P$}
\end{prooftree}

But, 

 $$ \alpha \circ \sigma(\alpha) = \tau \left(\sigma(\alpha), \alpha\right) \underset{tsr}{=} \rho \in \Pi_{1}\left(\mathbb{P}^2\right).$$


\item[($\epsilon$): Identity]$$$$

\begin{prooftree}
\AxiomC{ $P \underset{\alpha}{=} P$}
\AxiomC{$P \underset{\rho}{=} P$}
\BinaryInfC{ $P \underset{\tau \left(\alpha,\rho\right)}{=} P$}
\end{prooftree}

But, 

 $$\rho \circ \alpha = \tau \left(\alpha, \rho\right) \underset{tlr}{=} \alpha \in \Pi_{1}\left(\mathbb{P}^2\right).$$

On the other hand, we have:

\begin{prooftree}
\AxiomC{$P \underset{\rho}{=} P$}
\AxiomC{ $P \underset{\alpha}{=} P$}
\BinaryInfC{ $P \underset{\tau \left(\rho, \alpha\right)}{=} P$}
\end{prooftree}

But, 

$$\alpha \circ \rho = \tau \left(\rho,\alpha\right) \underset{trr}{=} \alpha \in \Pi_{1}\left(\mathbb{P}^2\right).$$


\item[( $\circ$ ): Associativity]$$$$

\begin{prooftree}
\AxiomC{ $P \underset{\alpha}{=} P$}
\AxiomC{ $P \underset{\alpha}{=} P$}
\BinaryInfC{$ P \underset{\tau \left(\alpha,\alpha\right)}{=}P$}
\AxiomC{$P \underset{\alpha}{=}P$}
\BinaryInfC{$P \underset{\tau \left(\tau \left(\alpha,\alpha\right),\alpha \right)}{=}P$}
\end{prooftree}

But, 

\begin{eqnarray*}
\tau \left(\tau \left(\alpha,\alpha\right),\alpha \right)  
 &=& \alpha \circ \tau \left(\alpha,\alpha\right) \\
 &\underset{cicl}{=}& \alpha \circ \rho\\
 &=& \tau \left(\rho,\alpha\right)\\
 &\underset{trr}{=}& \alpha\\
\end{eqnarray*}
\bigskip

On the other hand, we have:

\begin{prooftree}
\AxiomC{$P \underset{\alpha}{=}P$}
\AxiomC{$P \underset{\alpha}{=} P$}
\AxiomC{$P \underset{\alpha}{=} P$}
\BinaryInfC{$ P \underset{\tau \left(\alpha,\alpha\right)}{=}P$}
\BinaryInfC{$P \underset{\tau (\alpha, \tau \left(\alpha,\alpha)\right)}{=}P$}
\end{prooftree}

But, 

\begin{eqnarray*}
\tau (\alpha, \tau \left(\alpha,\alpha)\right)  
 &=&  \tau (\alpha,\alpha)\circ\alpha \\
 &\underset{cicl}{=}& \rho \circ \alpha \\
 &=& \tau \left(\alpha, \rho\right) \\
&\underset{tlr}{=}& \alpha\\
\end{eqnarray*}

How  $\tau \left(\tau \left(\alpha,\alpha\right),\alpha \right)= \tau (\alpha, \tau \left(\alpha,\alpha)\right)$, fallow that associativity is valid and therefore $\left( \Pi_{1}(\mathbb{P}^{2}),\circ \right)$ is a group generated by $\rho$ and $\alpha$.

\end{proof}

\begin{theorem}
$ \Pi_{1}(\mathbb{P}^{2}) \simeq \mathbb{Z}_{2}$.
\end{theorem}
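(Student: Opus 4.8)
The plan is to build an explicit isomorphism between $\Pi_{1}(\mathbb{P}^{2})$ and the cyclic group $\mathbb{Z}_{2} = \{0,1\}$ under addition modulo $2$. The preceding lemma already supplies the essential input: every path generated by $\rho$ and $\alpha$ is $rw$-equal either to $\rho$ or to $\alpha$, so as a set $\Pi_{1}(\mathbb{P}^{2})$ consists of at most the two classes $[\rho]_{rw}$ and $[\alpha]_{rw}$. The first thing I would establish is that these two classes are genuinely distinct, so that $\Pi_{1}(\mathbb{P}^{2})$ has exactly two elements. This is precisely the point at which the termination and confluence of $LND_{EQ}$-$TRS$ are needed: confluence gives unique normal forms, and since $\rho$ and $\alpha$ are already distinct normal forms (neither one $rw$-contracts to the other), they cannot be $rw$-equal.

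With the two-element set in hand, I would define the candidate isomorphism $\varphi : \Pi_{1}(\mathbb{P}^{2}) \to \mathbb{Z}_{2}$ by $\varphi([\rho]_{rw}) = 0$ and $\varphi([\alpha]_{rw}) = 1$, whose inverse sends $0 \mapsto [\rho]_{rw}$ and $1 \mapsto [\alpha]_{rw}$. Bijectivity is then immediate, since $\varphi$ is a visibly injective map from a two-element set onto $\{0,1\}$.

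The remaining task is to check that $\varphi$ respects the group operation, which amounts to comparing the four products of $\Pi_{1}(\mathbb{P}^{2})$ against the Cayley table of $\mathbb{Z}_{2}$. Recalling that $r \circ s = \tau(s,r)$ and using the reductions already established in the lemma, I would verify
\begin{align*}
\varphi(\rho \circ \rho) &= \varphi(\tau(\rho,\rho)) = \varphi(\rho) = 0 = 0 + 0, \\
\varphi(\alpha \circ \rho) &= \varphi(\tau(\rho,\alpha)) = \varphi(\alpha) = 1 = 1 + 0, \\
\varphi(\rho \circ \alpha) &= \varphi(\tau(\alpha,\rho)) = \varphi(\alpha) = 1 = 0 + 1, \\
\varphi(\alpha \circ \alpha) &= \varphi(\tau(\alpha,\alpha)) = \varphi(\rho) = 0 = 1 + 1 \pmod 2.
\end{align*}
The only case carrying any real content is the last, where the higher path $cicl$ yields $\tau(\alpha,\alpha) \underset{cicl}{=} \rho$; the other three follow from the $trr$ and $tlr$ rules.

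I expect the main obstacle to lie not in these finite computations but in the distinctness claim of the first paragraph. One must genuinely lean on the confluence of $LND_{EQ}$-$TRS$ to be certain that $\alpha$ does not collapse to $\rho$, for otherwise the group would be trivial rather than $\mathbb{Z}_{2}$. Once uniqueness of normal forms guarantees $[\rho]_{rw} \neq [\alpha]_{rw}$, the rest of the argument is a bookkeeping check, and I would conclude that $\varphi$ is a bijective homomorphism, hence $\Pi_{1}(\mathbb{P}^{2}) \simeq \mathbb{Z}_{2}$.
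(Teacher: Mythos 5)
Your proposal is correct and takes essentially the same route as the paper: the preceding lemma reduces $\Pi_{1}(\mathbb{P}^{2})$ to the two classes $[\rho]_{rw}$ and $[\alpha]_{rw}$, and the isomorphism is the evident bijection with $\mathbb{Z}_{2}$ (the paper's maps $toPath_{\mathbb{Z}_2}$ and $toInteger$ are exactly your $\varphi^{-1}$ and $\varphi$). If anything you are more thorough than the paper, which simply exhibits the two maps and declares the isomorphism, whereas you also verify the homomorphism identities and note that the distinctness of $[\rho]_{rw}$ and $[\alpha]_{rw}$ rests on the confluence of $LND_{EQ}$-$TRS$ --- a point the paper passes over in silence.
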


\begin{proof}

As before, proving this is equivalent to finding a bijection between spaces.$$$$

First, consider the map  $toPath_{\mathbb{Z}_2}:  \mathbb{Z}_2 \longrightarrow \Pi_{1}\left (\mathbb{P}^2\right)$ defined by:

$$toPath_{\mathbb{Z}_2} = \begin{cases}
toPath(0)=\rho &  \\
toPath(1)= \alpha = loop^{1}    \hspace{0.4cm}&  \\
\end{cases}$$ 
$$$$
Now, consider the map $toInteger: \Pi_{1}\left (\mathbb{P}^2\right) \longrightarrow \mathbb{Z}_2$  defined by:

$$toInteger = \begin{cases}
toInteger(loop^{0}=\rho)=0&  \\
toInteger(loop^{1}=\alpha)=1&\\
\end{cases}$$
$$$$

Thus, the isomorphism holds.
\end{proof}

	\section{Conclusion}
	
    In this work, our main objective has been the calculation of the fundamental groups of many surfaces using homotopy type theory. We have seen that it is possible to do these calculations by means of an entity known as computational paths. The main advantage of this approach is that we have avoided the use of more complex techniques, code-encode-decode one. As a consequence, our calculations proved to be straightforward and simple. Using computational paths as our main tool, we have calculated the fundamental group of the circle, cylinder, M{\"o}bius band, torus and projective plane. Therefore, we have shown that it is possible to use the theory of computational paths to obtain central results of algebraic topology and homotopy type theory.   
	
	\bibliographystyle{plain}
	\bibliography{ref1}	
	
	\newpage
	\appendix
	
	\section{Subterm Substitution}
	
	In Equational Logic, the sub-term substitution is given by the following inference rule \cite{Ruy2}:
	\begin{center}
		\begin{bprooftree}
			\AxiomC{$s = t$ }
			\UnaryInfC{$s\theta = t\theta$}
		\end{bprooftree}
	\end{center}
	
	One problem is that such rule does not respect the sub-formula property. To deal with that, \cite{chenadec} proposes two inference rules:
	
	\begin{center}
		\begin{bprooftree}
			\AxiomC{$M = N$}
			\AxiomC{$C[N] = O$}
			\RightLabel{$IL$ \quad}
			\BinaryInfC{$C[M] = O$}
		\end{bprooftree}
		\begin{bprooftree}
			\AxiomC{$M = C[N]$}
			\AxiomC{$N = O$}
			\RightLabel{$IR$ \quad}
			\BinaryInfC{$M = C[O]$}
		\end{bprooftree}
	\end{center}
	
	\noindent where M, N and O are terms.
	
	As proposed in \cite{Ruy1}, we can define similar rules using computational paths, as follows:
	
	\begin{center}
		\begin{bprooftree}
			\AxiomC{$x =_r {\cal C}[y]: A$}
			\AxiomC{$y =_s u : A'$}
			\BinaryInfC{$x =_{{\tt sub}_{\tt L}(r,s)} {\cal C}[u]: A$}
		\end{bprooftree}
		\begin{bprooftree}
			\AxiomC{$x =_r w : A'$}
			\AxiomC{${\cal C}[w]=_s u : A$}
			\BinaryInfC{${\cal C}[x]=_{{\tt sub}_{\tt R}(r,s)} u : A$}
		\end{bprooftree}
	\end{center}
	
	\noindent where $C$ is the context in which the sub-term detached by '[ ]' appears and $A'$ could be a sub-domain of $A$, equal to $A$ or disjoint to $A$.
	
	In the rule above, ${\cal C}[u]$ should be understood as the result of replacing every occurrence of $y$ by $u$ in $C$.
	
	\newpage
	
	\section{List of Rewrite Rules}
	
	We present all rewrite rules of $LND_{EQ}$-$TRS$. They are as follows (All have been taken from \cite{Ruy1}):
	\\
	
	\noindent 1. $\sigma(\rho) \triangleright_{sr} \rho$ \\ 
	2. $\sigma(\sigma(r)) \triangleright_{ss} r$\\ 
	3. $\tau({\cal C}[r] , {\cal C}[\sigma(r)]) \triangleright_{tr}  {\cal C }[\rho]$\\ 
	4. $\tau({\cal C}[\sigma(r)], {\cal C}[r]) \triangleright_{tsr} {\cal C}[\rho]$\\ 
	5. $\tau({\cal C}[r], {\cal C}[\rho]) \triangleright_{trr} {\cal C}[r]$\\ 
	6. $\tau({\cal C}[\rho], {\cal C}[r]) \triangleright_{tlr} {\cal C}[r]$ \\ 
	7. ${\tt sub_L}({\cal C}[r], {\cal C}[\rho]) \triangleright_{slr} {\cal C}[r]$\\ 
	8. ${\tt sub_R}({\cal C}[\rho], {\cal C}[r]) \triangleright_{srr} {\cal C}[r]$ \\
	9. ${\tt sub_L} ({\tt sub_L} (s, {\cal C}[r]), {\cal C}[\sigma(r)]) \triangleright_{sls} s$\\
	10. ${\tt sub_L} ( {\tt sub_L} (s , {\cal C}[\sigma(r)]) , {\cal C}[r]) \triangleright_{slss} s$\\ 
	11. ${\tt sub_R} ({\cal C}[s], {\tt sub_R} ({\cal C}[\sigma(s)],r)) \triangleright_{srs} r$\\ 
	12. ${\tt sub_R} ({\cal C}[\sigma(s)], {\tt sub_R} ({\cal C}[s] ,  r )) \triangleright_{srrr} r$\\ 
	13. 
	$\mu_1 ( \xi_1 ( r))\triangleright_{mx2l1} r$\\
	14. $\mu_1 ( \xi_\land ( r,s))\triangleright_{mx2l2} r$\\
	15.
	$\mu_2 ( \xi_\land ( r,s))\triangleright_{mx2r1} s$\\
	16.
	$\mu_2 ( \xi_2 ( s))\triangleright_{mx2r2} s$\\
	17. 
	$\mu ( \xi_1 (r) , s , u) \triangleright_{mx3l} s$\\ 
	18. 
	$\mu (\xi_2 (r) , s , u) \triangleright_{mx3r} u$\\ 
	19.
	$\nu (\xi (r)) \triangleright_{mxl} r$\\ 
	20.
	$\mu (\xi_2 (r) , s) \triangleright_{mxr} s$\\ 
	21.
	$\xi ( \mu_1 (r),\mu_2(r) ) \triangleright_{mx} r$ \\ 
	22.
	$\mu ( t, \xi_1 (r), \xi_2 (s)) \triangleright_{mxx} t$ \\ 
	23. 
	$\xi ( \nu (r) ) \triangleright_{xmr} r$ \\ 
	24. 
	$\mu (s,\xi_2 (r)) \triangleright_{mx1r} s$\\ 
	25. $\sigma(\tau(r,s)) \triangleright_{stss} \tau(\sigma(s),  \sigma(r))$\\ 
	26. $\sigma({\tt sub_L}(r,s)) \triangleright_{ssbl} {\tt sub_R}(\sigma(s), \sigma(r))$\\ 
	27. $\sigma ({\tt sub_R} (r,s)) \triangleright_{ssbr} {\tt sub_L} (\sigma
	(s),  \sigma (r))$\\ 
	28. $\sigma(\xi (r)) \triangleright_{sx} \xi ( \sigma(r))$\\ 
	29. $\sigma(\xi (s, r)) \triangleright_{sxss} \xi ( \sigma(s),  \sigma(r))$\\ 
	30. $\sigma(\mu (r)) \triangleright_{sm} \mu ( \sigma(r))$\\ 
	31. $\sigma(\mu (s, r)) \triangleright_{smss} \mu (\sigma(s),  \sigma(r))$\\ 
	32. $\sigma(\mu (r,u,v)) \triangleright_{smsss} \mu ( \sigma(r),\sigma(u),\sigma(v))$\\
	33. $\tau (r, {\tt sub_L} (\rho , s)) \triangleright_{tsbll} {\tt sub_L}  (r,s)$\\ 
	34. $\tau (r, {\tt sub_R} (s, \rho)) \triangleright_{tsbrl}  {\tt 
		sub_L} (r,s)$\\ 
	35. $\tau({\tt sub_L}(r,s),t) \triangleright_{tsblr} \tau (r, {\tt 
		sub_R} (s,t))$\\ 
	36. $\tau ({\tt sub_R} (s,t),u) \triangleright_{tsbrr} {\tt sub_R} (s, \tau  (t,u))$\\ 
	37. $\tau(\tau(t,r),s) \triangleright_{tt} \tau(t,\tau (r,s)) $\\
	38. $\tau ({\cal C}[u], \tau ({\cal C}[\sigma(u)] , v)) \triangleright_{tts} v$\\
	39. $\tau ({\cal C}[\sigma(u)] , \tau ({\cal C}[u] , v)) \triangleright_{tst} u$\\
	
\end{document}